\documentclass[a4paper,10pt]{article}

\usepackage[ansinew]{inputenc}
\usepackage{amssymb, amsthm}
\usepackage{latexsym}
\usepackage{amsmath}
\usepackage{hyperref}
\usepackage{authblk}
\usepackage{float}

\newtheorem{theo}{Theorem}[section]
\newtheorem{prop}[theo]{Proposition}

\newtheorem{lem}[theo]{Lemma}
\theoremstyle{definition}

\newcommand{\sfrac}[2]{{\textstyle\frac{#1}{#2}}}

\makeatletter
\renewcommand{\@maketitle}{
\newpage
 \null
 \vskip 1em%
 \begin{flushright}
ITP-UH-17/10\\
\end{flushright}
 \begin{center}%
  {\Large \textbf \@title \par}%
  \vskip 2.5em
  {\large \@author \par}
  \vskip 1.5 em
 \end{center}%
 \par} \makeatother

\title{Homogeneous heterotic supergravity solutions with linear dilaton}
\author{Christoph N\"olle}
\date{}

\begin{document}
\maketitle

\begin{center}
{\em Institut f\"ur Theoretische Physik, Leibniz Universit\"at Hannover \\
    Appelstra{\ss}e 2, 30167 Hannover, Germany } \\
   email: noelle@math.uni-hannover.de
\end{center}
\bigskip

\begin{abstract}
 I construct solutions to the heterotic supergravity BPS-equations on products of Minkowski space with a non-symmetric
coset. All of the bosonic fields are homogeneous and non-vanishing, the dilaton being a linear function on the
non-compact part of spacetime.
\end{abstract}

\bigskip

\tableofcontents
\numberwithin{equation}{section}

\section{Introduction}
 The aim of this paper is to present some homogeneous solutions to the heterotic supergravity equations \cite{BghdR89, BeckerSethi09}
 \begin{equation}\label{IntroSUSYEqtns}
  \begin{aligned}{}
    \nabla_\mu^-\varepsilon=\big(\nabla_\mu - \frac 18 H_{\mu\alpha\beta}\gamma^{\alpha\beta} \big)\varepsilon &=0  \\
    \gamma\big(d\phi - \frac 1{12} H\big)\varepsilon &=0  \\
    \gamma(F)\varepsilon &=0  
  \end{aligned}
 \end{equation} 
 and the Bianchi identity
 \begin{equation}\label{IntroBianchi}
   dH = \frac {\alpha'}4 \text{tr}\Big(R^+ \wedge R^+ - F\wedge F\Big),
 \end{equation} 
 where $\varepsilon$ is a spinor on a 10-dimensional Lorentzian manifold $M$, $H$ is a three-form, $\phi$ a function, 
and $F$ the curvature of a gauge field on $M$, and $\nabla^\pm$ are two connections on the tangent bundle $TM$,
involving $H$. In \eqref{IntroSUSYEqtns} $\gamma$ is the map from forms to the Clifford algebra. The manifold $M$ will be chosen of the form $M=
\mathbb R^{p,1}\times G/K$ for a non-symmetric, naturally reductive coset $G/K$, with (mostly) simple compact Lie groups
$K\subset G$, equipped with the metric induced by the Killing form.

\bigskip

This choice is canonical for the following reasons. First of all every such non-symmetric coset carries a
$G$-invariant 
(or homogeneous) three-form, which we will identify with $H$. Upon proper normalization of $H$, the question whether the gravitino equation $\nabla ^- \varepsilon=0$ has a solution $\varepsilon$, turns into a simple representation-theoretical
problem. Furthermore, upon this choice of $H$ we also get a solution for $F$, namely the curvature $R^-$ of the so-called canonical connection $\nabla^-$, which also appears in the gravitino equation. It satisfies both $\gamma(R^-)\varepsilon=0$ if $\varepsilon$ solves the gravitino equation, and $dH\sim $ tr$(R^+\wedge R^+ - R^-\wedge R^-)$, leading to a solution of all of the equations except the dilatino one, $\gamma(d\phi-\frac 1{12}H)\varepsilon =0$. 

\bigskip

 All this suggests that the heterotic supergravity equations are tailored to admit homogeneous solutions; 
in particular the Bianchi identity allowing for non-trivial $dH$ is an important deviation from the standard
supergravity rule $dH=0$, which would immediately rule out these spaces. The situation changes with the dilatino
equation however. On a general coset $G/K$ there are no homogeneous 1-forms which could serve as $d\phi$, and if they
exist they tend to be non-exact. Therefore the only obvious choice would be to take $d\phi=0$. This is not possible
however, as we have $\gamma(H)\varepsilon \neq 0$, and there is also a simple no-go theorem excluding this type of
solutions. 

\bigskip
 
 In \cite{LNP10} we proposed to circumvent this problem by allowing for non-trivial fermion condensates, but in this paper a solution based on purely bosonic backgrounds will be presented. The backdoor we are going to use is to introduce an additional $\mathbb R^p$ factor to our spacetime $M$, with 
$p$ the rank difference of $G$ and $K$ (or $p=2$ for equal rank groups). Then we can take $\phi$ to
be a linear function on $\mathbb R^p$, giving rise to a constant and thus homogeneous 1-form $d\phi$. It will be shown
below that upon this choice of $\phi$ it is often possible to solve also the last equation. The amount of supersymmetry
preserved in the space orthogonal to $\mathbb R^p \times G/K$ is then at least $\mathcal N= 2^p -p$. For trivial $K$
one gets a Wess-Zumino-Witten (WZW) model coupled to a linear dilaton, and these models exist also in type II string
theory. They were considered already in \cite{Callan91}, as a certain limit of NS5-branes.

\bigskip

 The BPS-equations \eqref{IntroSUSYEqtns} actually guarantee that our supergravity vacua preserve some supersymmetry, and 
it was sometimes argued that they imply the usual equations of motions. Ivanov has proven that this is not the
case if the equations of motion are truncated at order $\alpha'$ as well, instead one would have to replace $R^+$ in the Bianchi
identity and the equation of motion by $R^-$ to ensure this \cite{Ivanov}. The correct interpretation of this result seems to be that
full compatibility between the supersymmetry equations \eqref{IntroSUSYEqtns} and the equations of motion requires the full tower of string corrections to both sets of equations, as explained in \cite{BeckerSethi09}, based on results of \cite{BghdR89}.
 Therefore, maybe one should take the solutions of the system \eqref{IntroSUSYEqtns} not as a proof but rather as an 
indication that there exists a heterotic string theory on these backgrounds. 

\bigskip

 From a physical perspective the linear dilaton certainly rules out these spaces as models for our universe; but see
\cite{Kimura09} for an intersecting brane scenario with chiral fermions. An interpretation of the type of solution
considered here in terms of a decoupling limit of string theory is given in \cite{Aharony98, Giv99}. \\
\indent On the other hand, a couple of homogeneous solutions to the above equations have been presented in recent
years where the dilaton is actually constant \cite{FIUV, Ugarte09}. The method used in these
works is somewhat different from ours, as they take as starting point Strominger's reformulation of the BPS equations
\cite{Strominger}. Furthermore they avoid the above-mentioned no-go
theorem by choosing $M$ to be a non-semisimple Lie group (or a finite quotient thereof)
equipped with a metric of negative scalar curvature, whereas in our models the metric always comes from the
bi-invariant one on $G$ and has positive scalar curvature. Although many of the spaces we discuss
allow for other homogeneous metrics as well, the solution of the equations becomes more involved with these. One
advantage of not relying on Strominger's equations is that we are not restricted to compact spaces of dimension six. In
fact we will find solutions with compact spaces of arbitrary odd dimension, and also 6-dimensional ones.

\bigskip

 After introducing the necessary tools for homogeneous spaces in section \ref{sec_HomVecBdls}, we will in section
\ref{sec_Spinors} develop a method which allows us
 both to prove existence of $\nabla^-$-parallel spinors on many of the considered spaces, and to calculate the action of
$\gamma(H)$ on these spinors, thus enabling us to determine the linear dilaton needed to satisfy also the dilatino
equation. In section \ref{sec_Sasaki} we discuss homogeneous Sasaki-Einstein manifolds, which are a particular class of
spaces where this method can be applied.
 The last section \ref{sec_bsp} has some examples treated in detail, based on the cosets
 \begin{itemize}
 \item SU($n+1)$/SU($n)=S^{2n+1}$
 \item Sp($n+1)$/Sp$(n)=S^{4n+1}$
 \item Sp($n$)/SU($n$)
 \item SO($2n)$/SU($n$)
 \item SO($n+1)$/SO($n-1$)
 \item Spin(7)/$G_2=S^7$ 
 \item $G_2$/SU(3)$\,=S^6$ 
 \item SU(3)/U(1)$\times $U(1)
 \item SO(5)/SO(3)$_\text{max}$
\end{itemize}
 It is intriguing that all of those spaces admit one of the following structures:  
\begin{itemize}
 \item nearly K\"ahler (in 6D),
 \item nearly parallel $G_2$ (in 7D),
 \item Sasaki-Einstein (in odd dimension),
 \item 3-Sasaki (in $4n+3$ dimensions),
\end{itemize}
 although the metric we use in most cases differs from the one defining this structure. 
 These are exactly the spaces whose cones admit parallel spinors, 
 and they play an important role in other types of string theory as well \cite{Acharya98, Boyer07}.
 The amount of supersymmetry preserved depends on the geometric type of the manifold, nearly K\"ahler, $G_2$, and
Sasaki-Einstein generically have $\mathcal N=1$, whereas 3-Sasakian spaces preserve more supersymmetry. \\
\indent It should be mentioned that
the method presented does not generalize to symmetric spaces, like $S^ n=$ SO($n+1)$/SO($n$) with its round metric; the
equation $\nabla \varepsilon =0$ does not have a solution there. 

\section{Heterotic supergravity}\label{sec_HetSUGRA}
 The low-energy limit of heterotic string theory is given by 10D $\mathcal N=1$ supergravity coupled to super
Yang-Mills. The bosonic part of the effective action is \cite{BeckerSethi09}
 \begin{equation}
   S = \int_M \Big(\text{Scal}^g  + 4|d\phi|^2 - \frac 12 |H|^2 + \frac {\alpha'}4 \text{tr}\big(|R^+|^2 - |F|^2\big)\Big) \text{Vol}^g,
 \end{equation}  
 where we adopt the widely-used convention to denote by tr a positive-definite form on a Lie algebra, 
in fact always minus the ordinary trace over tangent space in our examples. It leads to the following field equations
(to order $\alpha')$:
 \begin{equation}\label{EOMs}
  \begin{aligned}{}
    \text{Ric}_{\mu\nu} +2 (\nabla d\phi)_{\mu\nu} -\frac 14 H_{\mu\alpha\beta}{H_\nu}^{\alpha\beta}
 +\frac {\alpha'}4 \Big[  R^+_{\mu\alpha\beta\gamma} R_\nu^{+ \alpha\beta\gamma} - \text{tr}\big(F_{\mu\alpha} {F_\nu}^\alpha
  \big)\Big]&=0, \\
 \text{Scal} + 4 \Delta \phi  -4|d\phi|^2 -\frac 12| H|^2
 + \frac {\alpha'}4 \text{tr}\Big[ | R^+|^2 - |F|^2\Big]&=0 ,\\
  e^{2\phi }d\ast e^{-2\phi} F + A\wedge \ast F- \ast F\wedge A  + \ast H \wedge F&=0,\\
    d\ast e^{-2\phi}H  &=0.
  \end{aligned}
 \end{equation} 
  The full action with fermions is invariant under supersymmetry, acting on the fermions as 
\begin{equation}
 \begin{aligned}{}
      \delta \psi_\mu &=\nabla_\mu^- \varepsilon , \\
    \delta \lambda &= -\frac 12 \gamma\big(d\phi -\frac 1{12}H\big)\varepsilon,\\
    \delta \chi &= -\frac 14 \gamma(F) \varepsilon,
 \end{aligned}
\end{equation}  
 where $\psi$ is the gravitino, $\lambda$ the dilatino, and $\chi$ the gaugino. The quantization map $\gamma$ is explicitly
 \begin{equation}
   \gamma \Big(\frac 1{p!} \omega_{\mu_1\dots \mu_p} dx^{\mu_1}\wedge\dots\wedge dx^{\mu_p}\Big)=  \omega_{\mu_1\dots \mu_p} \gamma^{\mu_1}\dots \gamma^{\mu_p},
 \end{equation} 
 where we use the convention $\{ \gamma^\mu,\gamma^\nu\} = 2g^{\mu\nu}$.  
 The requirement that these variations vanish ensures that a background preserves supersymmetry, and is precisely the set of equations \eqref{IntroSUSYEqtns}. Here the connections $\nabla^\pm$ are related to the Levi-Civita connection of $g$ via
   \begin{equation}
     (\Gamma^-)^a_{bc} = \Gamma^a_{bc} + \textstyle{\frac 12} {H^a}_{bc}, \qquad 
     (\Gamma^+)^a_{bc} = \Gamma^a_{bc} - \textstyle{\frac 12} {H^a}_{bc}.
  \end{equation} 
 In addition to the equations of motion or supersymmetry equations, one has to impose the Bianchi identity
 \begin{equation}
    dH = \frac {\alpha'}4 \text{tr}\Big(R^+ \wedge R^+ - F\wedge F\Big).
  \end{equation}
 It has been proposed to choose the same connection everywhere in the equations, instead of $\nabla^+$ and $\nabla^-$, but here we
stick to the usual convention, which seems to be preferred from a string theoretical point of view  \cite{BeckerSethi09}. We cannot expect the
equations of motion \eqref{EOMs} to be implied by the supersymmetry equations then, as this would require taking into
account all $\alpha'$ corrections. Those equations however which do not involve the gauge field will be satisfied (and the Yang-Mills equation for $F$ as well). They are the $H$-equation $d\ast e^{-2\phi }H=0$ and the
following combination of dilaton equation and trace of the Einstein equation:
   \begin{equation}\label{EinsteinDilatonCombinoalpha}
  \text{Scal} - 8|d\phi|^2 + 6\Delta\phi + \frac 12 |H|^2 =0.
 \end{equation}
 From this we can derive a simple no-go theorem. Suppose the dilaton is constant, then
 \begin{equation}
     \text{Scal} =- \frac 12 |H|^2 ,
 \end{equation} 
 and the scalar curvature must be non-positive. It should be mentioned that there is also a constraint on the cohomology
class of $H$:
 \begin{equation}
  [H] \in H^3\big(M,\, 4\pi^2\alpha' \mathbb Z\big),
 \end{equation} 
 if $dH=0$, which leads to the quantization of the level in WZW models for instance. For $dH\neq 0$ the requirement 
will be that a certain combination of $H$ and the Chern-Simons forms of $R^+$ and $F$ defines an integer cohomology
class, but we will simply ignore this condition in what follows, as most of the spaces we consider have $H^3(M,\mathbb
Z)=0$ anyway. 

\bigskip

 As mentioned in the introduction, in this paper we will solve the BPS equations \eqref{IntroSUSYEqtns} together with the Bianchi identity \eqref{IntroBianchi}, and ignore the equations of motion completely. 

\section{Homogeneous vector bundles}\label{sec_HomVecBdls}
 Let $G$ be a connected compact simple Lie group equipped with the bi-invariant Riemannian metric $g$ (induced
 by minus the Killing form on its Lie algebra $\mathfrak g$),  and $K$ a
naturally reductive subgroup. This means we have an orthogonal splitting of the Lie algebra $\mathfrak g = \mathfrak
k\oplus \mathfrak m$, with ad$(\mathfrak k)\mathfrak m \subset \mathfrak m$, so that $\mathfrak m$ carries a
representation of $\mathfrak k$. Let $(V,\rho)$ be a representation of $K$, and $E= G\times _K V$ the associated
vector bundle over $G/K$, which consists of equivalence classes $[g,v]$ with $g\in G$ and $v\in V$, and
identification $[g,v]=[gk^{-1},\rho(k)v]$ for all $k\in K$.
Its sections are in a 1-1 correspondence with maps $f:G\rightarrow V$ satisfying
 \begin{equation}
   f(gk) = \rho(k)^ {-1} f(g),\qquad \forall k\in K.
 \end{equation} 
 $G$ acts on the space of sections $\Gamma(G/K,E)$ through $(g\cdot f)(h) = f(g^{-1}h)$. The set of $G$-invariant
sections (also called homogeneous sections) is thus given by the constant functions, and therefore in a 1-1
correspondence to the $K$-invariant elements of
$V$:
 \begin{lem}\label{Lem_Hinvariants=GinvSections}
  Let $V$ carry a representation of $K$, then
  $$ \Gamma(G/K, G\times_K V)^G \simeq V^K.$$
 \end{lem}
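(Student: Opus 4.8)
The plan is to exhibit the isomorphism $\Gamma(G/K, G\times_K V)^G \simeq V^K$ explicitly and check it is well-defined and bijective. Recall from the preceding discussion that sections of $E = G\times_K V$ correspond bijectively to maps $f\colon G\to V$ with $f(gk) = \rho(k)^{-1}f(g)$ for all $k\in K$, and that $G$ acts on such $f$ by $(g\cdot f)(h) = f(g^{-1}h)$. So the first step is to unwind what $G$-invariance of a section means in terms of its defining function: $f$ is $G$-invariant iff $f(g^{-1}h) = f(h)$ for all $g,h\in G$, i.e.\ iff $f$ is constant on $G$.

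Next I would observe that a constant function $f\equiv v_0$ automatically satisfies the equivariance condition $f(gk) = \rho(k)^{-1}f(g)$ precisely when $v_0 = \rho(k)^{-1}v_0$ for all $k\in K$, that is, when $v_0\in V^K$. Conversely, every $v_0\in V^K$ gives, via the constant function $f\equiv v_0$, a well-defined equivariant map and hence a $G$-invariant section. This sets up the map $V^K \to \Gamma(G/K,E)^G$, $v_0\mapsto [g,v_0]$ (the section $gK\mapsto [g,v_0]$, which is independent of the representative $g$ exactly because $v_0$ is $K$-fixed), together with its inverse $f\mapsto f(e)$.

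Finally I would check that these two maps are mutually inverse and $\mathbb{R}$- (or $\mathbb{C}$-) linear, which is immediate: evaluating a constant function at $e$ returns $v_0$, and a $G$-invariant section is constant so is determined by its value at $e$. Linearity is clear since both the correspondence between sections and equivariant functions and the evaluation map are linear.

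There is essentially no hard part here; the statement is a standard bookkeeping fact about associated bundles, and the only thing to be careful about is the direction of the conventions (left versus right translation, and whether $\rho(k)$ or $\rho(k)^{-1}$ appears), so that the constant-function argument lands on $V^K$ rather than on some twisted version of it. I would simply make sure the chosen conventions are stated consistently with the lines immediately preceding the lemma.
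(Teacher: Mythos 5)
Your proof is correct and follows exactly the route the paper itself takes (the paper only sketches it in one sentence before the lemma): $G$-invariant sections correspond to constant equivariant functions $G\to V$, and a constant function is equivariant precisely when its value lies in $V^K$. Your extra care with the inverse map $f\mapsto f(e)$ and the conventions is fine but adds nothing beyond the paper's argument.
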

 We will adopt the following index convention. Basis elements of $\mathfrak k$ will be denoted by $I_k,I_l,\dots$,
those of $\mathfrak m$ by $I_a,I_b,\dots$, and the full set of basis elements of $\mathfrak g$ by $I_\mu,I_\nu,\dots$.
The dual basis of left-invariant 1-forms on $G$ is denoted $e^\mu$, or $e^k$ and $e^a$ for those dual to $I_k$ and
$I_a$. The pull-backs of these forms to $G/K$ will be denoted by the same symbols, and they satisfy the Maurer-Cartan
equations
\begin{equation}
 \begin{aligned}{}
   de^k &= -\frac 12 f^k_{lm}e^l\wedge e^m - \frac 12 f^k_{ab} e^ a\wedge e^b , \\
   de^a &= -\frac 12 f^a_{bc}e^b\wedge e^c - f^a_{bk} e^ b\wedge e^k,
 \end{aligned}
\end{equation}  
 where $f^\lambda_{\mu\nu}$ are the structure constants of $\mathfrak g$, defined by $[I_\mu,I_\nu]=f^\lambda_{\mu\nu}
I_\lambda$. 
Our metric on $\mathfrak g$ will be minus the Killing form
 \begin{equation}
   g(X,Y) =  \text{tr}_\mathfrak g \big(\text{ad}(X)\circ \text{ad}(Y)\big), \qquad X,Y\in\mathfrak g,
 \end{equation}  
 or in coordinates
 \begin{equation}
   g_{ab} = -\big(f^d_{ac}f^c_{bd} +2 f^k_{ac}f^c_{bk}\big) ,\qquad g_{kl}=- \big(f^n_{km}f^m_{ln} +  f^b_{ka}f^a_{lb}\big).
 \end{equation}  
 $g$ is $\mathfrak g$-invariant, thus also $\mathfrak k$-invariant, and gives rise to a homogeneous metric on $G/K$.  

\paragraph{The 3-form.}
 Another important example of a homogeneous section is the following. Define $H \in \Lambda^3 \mathfrak m^*$ through
 \begin{equation}
  H(X,Y,Z ) = -g([X,Y],Z) ,\qquad \forall X,Y,Z\in \mathfrak m,
 \end{equation} 
 or in coordinates
 \begin{equation}
  H=- \frac 16 f_{abc}e^a\wedge e^b\wedge e^c.
 \end{equation} 
 Then $H$ is $K$-invariant, and gives rise to a 3-form on $G/K$. In case $K$ is chosen trivial, $H$ becomes a generator
of $H^3(G,\mathbb Z)=\mathbb Z$ upon proper normalization of the metric $g$. In general $H$ is a natural candidate for
the 3-form of heterotic string theory. For the Bianchi identity we need to know the derivative of $H$, and for its equation of motion $d* H$ (using the notation $e^{abcd}=e^a\wedge e^b\wedge e^c \wedge e^d$):
 \begin{lem}\label{Lem_Hderivatives}
 We have 
 \begin{equation}
  \begin{aligned}{}
   dH= -\frac 14 f_{kab}f^k_{cd} &e^{abcd} = \frac 14 f_{abe}f^e_{cd} e^{abcd}, 
  \end{aligned}
 \end{equation} 
 whereas $d\ast H=0$.
\end{lem}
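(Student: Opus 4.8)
The plan is to pull everything back to $G$, where the left-invariant forms $e^\mu$ obey the Maurer--Cartan equations, and to exploit the total antisymmetry of $f_{abc}$ on $\mathfrak m$. This antisymmetry holds because $H(X,Y,Z)=-g([X,Y],Z)$ is manifestly antisymmetric in $X,Y$ and, by $\mathrm{ad}$-invariance of $g$, also in $Y,Z$. Since $H$ is the pull-back of a $3$-form on $G/K$, so is $dH$; in particular $dH$ is horizontal, so when computing $de^a$ via Maurer--Cartan one may simply discard the $f^a_{bk}\,e^b\wedge e^k$ terms. (If one prefers not to use this shortcut, their total contribution can be shown to vanish directly, using the $\mathrm{ad}(\mathfrak k)$-invariance of $H$ in the form $f^e_{k[a}f_{bc]e}=0$.)

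First I would compute $dH$. Starting from $H=-\frac{1}{6} f_{abc}\,e^{abc}$, the three terms produced by the Leibniz rule coincide after relabelling of dummy indices, by antisymmetry of $f_{abc}$, so that $dH=-\frac{1}{2} f_{abc}\,(de^a)\wedge e^b\wedge e^c$. Inserting $de^a=-\frac{1}{2} f^a_{de}\,e^d\wedge e^e$ (dropping the horizontal-irrelevant terms as above) gives $dH=\frac{1}{4} f_{abc}f^a_{de}\,e^{debc}$, and renaming summation and wedge indices, together with raising/lowering the contracted index using $g$ (legitimate because $\mathfrak m$ is $g$-orthogonal to $\mathfrak k$), turns this into $\frac{1}{4} f_{abe}f^e_{cd}\,e^{abcd}$. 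To reach the first of the two displayed expressions I would invoke that the Cartan $3$-form $-\frac{1}{6} f_{\mu\nu\rho}\,e^{\mu\nu\rho}$ on $G$ is closed --- which is precisely the Jacobi identity --- so that, restricted to four free $\mathfrak m$-indices, $\sum_{\mu\in\mathfrak g}f_{\mu[ab}f^\mu_{cd]}=0$; splitting the sum over $\mathfrak k$ and $\mathfrak m$ then converts $\frac{1}{4} f_{abe}f^e_{cd}\,e^{abcd}$ into $-\frac{1}{4} f_{kab}f^k_{cd}\,e^{abcd}$.

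For $d\ast H=0$ it suffices to show that the Levi--Civita codifferential satisfies $\delta H=0$, since $\delta=\pm\ast d\ast$ on the oriented Riemannian manifold $G/K$. (One could instead note that $\ast H$ is $\nabla^-$-parallel, but then the torsion term relating $d$ to $\nabla^-$ still has to be handled, so the direct route is cleaner.) In the invariant orthonormal frame $H_{abc}=-f_{abc}$ is constant, and on a naturally reductive space $\nabla_{I_a}I_b=\frac{1}{2}[I_a,I_b]_{\mathfrak m}$, so $(\delta H)_{bc}=-\sum_a(\nabla_{I_a}H)(I_a,I_b,I_c)$ expands into three terms quadratic in $f$: the one containing $f^e_{aa}$ vanishes by antisymmetry, and the remaining two cancel against each other after using the cyclic symmetry of $f_{abc}$. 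Hence $\delta H=0$ and $d\ast H=0$. The only real hazard throughout is keeping the three structure-constant identities apart --- $\mathrm{ad}(\mathfrak k)$-invariance of $H$ for the vanishing of the $e^k$-terms, Jacobi for the equality of the two forms of $dH$, and the cyclic antisymmetry of $f_{abc}$ for $\delta H=0$ --- but none of these steps is deep; the content of the lemma is essentially the Maurer--Cartan computation.
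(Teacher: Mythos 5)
Your proof is correct. The $dH$ half is in substance the paper's own computation: the paper likewise expands $H$ with Maurer--Cartan and must kill the mixed term $\tfrac14 f_{abc}f^a_{dk}e^{dkbc}$, which it does by noting (via Jacobi) that $f_{abc}f^a_{dk}$ splits into pieces symmetric in index pairs; your appeal to horizontality of the pull-back of $dH$, or equivalently to the $\mathrm{ad}(\mathfrak k)$-invariance $f^e_{k[a}f_{bc]e}=0$, is the same identity in different clothing, and both arguments pass between the two displayed forms of $dH$ by splitting the closedness of the Cartan three-form (Jacobi) over $\mathfrak g=\mathfrak k\oplus\mathfrak m$. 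Where you genuinely diverge is in proving $d\ast H=0$: the paper writes $\ast H$ out with the $\varepsilon$-tensor, differentiates with Maurer--Cartan, and disposes of the two resulting terms by a symmetry argument and the Jacobi identity, whereas you compute the codifferential $\delta H$ directly from the naturally reductive formula $\nabla_{I_a}I_b=\tfrac12[I_a,I_b]_{\mathfrak m}$ and the constancy of $H_{abc}=-f_{abc}$ in the invariant frame, the two surviving terms cancelling by total antisymmetry of $f_{abc}$. Your route avoids the $\varepsilon$-tensor bookkeeping entirely and is arguably cleaner; the one point worth stating explicitly is that the full Levi-Civita connection form \eqref{secCosets_Gamma-} also carries the $\mathfrak k$-valued piece $f^a_{kb}e^k$, which you may legitimately discard in $\nabla H$ only because it acts on $H$ through $\mathrm{ad}(\mathfrak k)$, under which $H$ is invariant.
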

\begin{proof}
 From the Maurer-Cartan equation we have
 $$ dH = \frac 14 f_{abf}f^f_{cd} e^{abcd} + \frac 14 f_{abc}f^a_{dk} e^{dkbc}.$$
 Consider the last term. It follows from the Jacobi identity that $f_{abc}f^a_{dk}$ splits into a part which is symmetric in $b$ and $d$, and another part symmetric in $c$ and $d$. Therefore this term vanishes. Using again a Jacobi identity, we conclude that
 $$  f_{abe}f^e_{cd} e^{abcd}= - f_{kab}f^k_{cd} e^{abcd}.$$
 $d\ast H$:  We assume the $I_\mu$ to form an othonormal basis, such that $f_{\mu\nu\lambda}$ is totally antisymmetric.
Furthermore we will not keep track of whether an index is up or down, but rather sum over any index appearing more than
once. Then we have
  $$ \ast H= -\frac 1{6(n-3)!} \varepsilon^{a_1\dots a_n} f_{a_1a_2a_3} e^{a_4 \dots a_n},$$
 with derivative
\begin{equation}
 \begin{aligned}\label{Cosets:H_coclosed}
  d\ast H &= \frac {1}{12(n-4)!}  \varepsilon^{a_1\dots a_n} f_{a_1a_2a_3} f^{a_4}_{bc} e^{bc a_5\dots a_n} \\
        & +\frac 1{6(n-4)!} \varepsilon^{a_1\dots a_n} f_{a_1a_2a_3} f^{a_4}_{bk} e^{bk a_5\dots a_n}.
 \end{aligned}
\end{equation}
 The first term is easily seen to vanish: $b$ and $c$ only run over the values of $a_1,a_2$ and $a_3$, giving contributions of the type
 $$ \varepsilon^{a_1\dots a_n} f_{a_1a_2a_3} f_{a_4 a_1a_2}e^{a_1a_2 a_5\dots a_n}, $$ 
 where the two $f$ factors are symmetric in $a_3$ and $a_4$, and thus vanish. Now let us consider the second contribution in \eqref{Cosets:H_coclosed}. We have
\begin{equation*}
 \begin{aligned}{}
   d \ast H &= \frac 1{2(n-4)!} \varepsilon^{a_1\dots a_n}  f_{a_1a_2a_3} f^{a_4}_{a_3k} e^{a_3k a_5\dots a_n}  \\
    &= \frac 1{2(n-3)!} \varepsilon^{a_1\dots a_n}  f_{a_1a_2\mu} f^{a_4}_{\mu k} e^{a_3k a_5\dots a_n}, 
 \end{aligned}
\end{equation*} 
 which vanishes due to the Jacobi identity again.
\end{proof}

\paragraph{Connections.}
 Due to the identification $T^*(G/K)= G\times _K\mathfrak m^*$, a connection on a homogeneous vector bundle $G\times _K
V$ can be considered as a map 
 $$ \nabla : C^\infty(G,V)^K \rightarrow C^\infty(G,V\otimes \mathfrak m^*)^K,$$
 satisfying additional properties. 
The simplest example is the so-called canonical connection $\nabla^-$, acting as
 \begin{equation}
   \nabla_X^- f = X_L (f) \qquad \forall X\in \mathfrak m,\ f\in C^\infty(G,V)^K.
 \end{equation} 
 Here $X_L$ is the left-invariant vector field on $G$ corresponding to $X$. In a trivialization of $T(G/K)$ induced by a
local map $G/K
\rightarrow G$, which allows to pull back the left-invariant 1-forms on $G$ to locally-defined 1-forms on $G/K$, its
connection form is given by
 \begin{equation}
    \Gamma^- =  d\rho_e(I_k)e^k,
 \end{equation} 
 with $d\rho_e$ the differential of $\rho:K\rightarrow $\,Aut($V$) at the identity. As is clear from the definition,
the parallel sections of $\nabla^-$ correspond to constant functions, and thus to $K$-invariant elements of
$V$:
\begin{lem} \label{lem_minusholonomy}
  Let $V$ carry a $K$ representation, then the parallel sections of $G\times_K V$ w.r.t. $\nabla^-$ are in a 1-1
correspondence with $K$-invariant elements of $V$, and by Lemma \ref{Lem_Hinvariants=GinvSections} are precisely the
$G$-invariant sections.
\end{lem}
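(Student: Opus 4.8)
The plan is to unwind the definitions until the claim becomes a statement about equivariant functions on $G$, and then to observe that such a function, once it is flat along the horizontal directions, is forced to be globally constant.

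First I would invoke the two dictionaries already set up: a section of $G\times_K V$ is a map $f\colon G\to V$ with $f(gk)=\rho(k)^{-1}f(g)$, and the canonical connection acts by $\nabla^-_X f = X_L f$ for $X\in\mathfrak m$. Hence the section represented by $f$ is $\nabla^-$-parallel if and only if $X_L f\equiv 0$ for every $X\in\mathfrak m$. (The derivatives of $f$ in the $\mathfrak k$-directions are already determined by the equivariance relation, so parallelism amounts to the single extra condition $X_L f\equiv 0$ along $\mathfrak m$; this matches the fact that the connection form $\Gamma^-=d\rho_e(I_k)e^k$ has only $\mathfrak k$-components.)

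The key step, and the one I expect to be the main obstacle, is to deduce from $X_L f\equiv 0$ for all $X\in\mathfrak m$ that $f$ is constant. I would show that $\mathcal S:=\{Z\in\mathfrak g\ :\ Z_L f\equiv 0\}$ is a Lie subalgebra of $\mathfrak g$, because for $Z_1,Z_2\in\mathcal S$ one has $[Z_1,Z_2]_L f = Z_{1L}(Z_{2L}f)-Z_{2L}(Z_{1L}f)=0$. Since $\mathfrak m\subseteq\mathcal S$, the subalgebra $\mathfrak h$ of $\mathfrak g$ generated by $\mathfrak m$ is contained in $\mathcal S$; using $\mathrm{ad}(\mathfrak k)\mathfrak m\subseteq\mathfrak m$ together with the Jacobi identity one checks that $\mathfrak h$ is in fact an ideal of $\mathfrak g$. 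As $\mathfrak g$ is simple and $\mathfrak m\neq 0$ this forces $\mathfrak h=\mathfrak g$, so $Z_L f\equiv 0$ for all $Z\in\mathfrak g$, and $f$ is constant since $G$ is connected. This is the only place where simplicity of $\mathfrak g$ is genuinely used; it is what rules out a section parallel along $G/K$ carrying information beyond its value at a single point.

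To conclude, I would note that a constant map $f\equiv v$ satisfies $f(gk)=\rho(k)^{-1}f(g)$ exactly when $\rho(k)v=v$ for all $k\in K$, i.e.\ $v\in V^K$, and that such $f$ are trivially $\nabla^-$-parallel. This gives the bijection between $\nabla^-$-parallel sections and $V^K$; composing it with the isomorphism $\Gamma(G/K,G\times_K V)^G\simeq V^K$ of Lemma \ref{Lem_Hinvariants=GinvSections} identifies the $\nabla^-$-parallel sections with the $G$-invariant ones.
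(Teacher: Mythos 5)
Your proof is correct. The paper does not actually write out an argument for this lemma --- it simply asserts, in the sentence preceding it, that ``as is clear from the definition, the parallel sections of $\nabla^-$ correspond to constant functions'' --- so what you supply is exactly the one step that is not a tautology: that $X_L f\equiv 0$ for all $X\in\mathfrak m$ forces $f$ to be constant on all of $G$, not merely along the $\mathfrak m$-directions. Your bracket-generation argument (the set $\mathcal S$ of directions annihilating $f$ is a subalgebra containing $\mathfrak m$; the subalgebra generated by $\mathfrak m$ is an ideal because $\mathrm{ad}(\mathfrak k)\mathfrak m\subset\mathfrak m$ plus Jacobi; hence it equals $\mathfrak g$ by simplicity, so $df=0$ and $f$ is constant on connected $G$) is the standard way to see this, and you are right to flag simplicity as the essential input: the statement genuinely fails when $\mathfrak m$ does not bracket-generate $\mathfrak g$. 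For instance, with $G=T^2$, $K$ one circle factor and $\rho$ a nontrivial character of $K$ on $V=\mathbb C$, the map $f(g_1,g_2)=\rho(g_2)^{-1}v$ is a nonconstant $\nabla^-$-parallel section even though $V^K=0$. The remaining identifications in your writeup --- constant equivariant maps correspond exactly to elements of $V^K$, which by Lemma \ref{Lem_Hinvariants=GinvSections} are the $G$-invariant sections --- coincide with the paper's.
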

 As the notation suggests, we will identify $\nabla^-$ with the connection appearing in the gravitino equation 
$\nabla^-\varepsilon =0$, and thereby translate the problem of solving this differential equation into a
representation-theoretical one for the holonomy group $K$. 
 Besides $\nabla^-$, we have some further homogeneous connections on $T(G/K)$, the Levi-Civita connection $\nabla$ of $g$, and the connection $\nabla^+$, given by 
 \begin{equation}
  \begin{aligned}{}
  {\Gamma^-}^c_{ab}  &= \Gamma^c_{ab} + \frac 12 {H^c}_{ab}, \\
  {\Gamma^+}^c_{ab}  &= \Gamma^c_{ab} - \frac 12 {H^c}_{ab}. 
  \end{aligned}
 \end{equation} 
 Explicitly, one finds \cite{LNP10}:
\begin{equation} 
\begin{aligned}
  \label{secCosets_Gamma-}
   \Gamma &= \Big(f^a_{kb} e^k + \textstyle{\frac 12} f^a_{cb}e^c\Big)\big(I_a\otimes e^b\big),\\
    \Gamma^ - &= f^a_{kb} e^k \big(I_a\otimes e^b\big)   , \\
    \Gamma^ + &= \big(f^a_{kb} e^k + f^a_{cb}e^c\big) \big(I_a\otimes e^b\big) =f^a_{\mu b }e^\mu\big(I_a\otimes
e^b\big).
 \end{aligned}
\end{equation}
 The structure group for these connections is generically SO($\mathfrak m)$, but $\nabla^-$ has
structure group $K\subset $ SO($\mathfrak m)$. Their curvatures, as elements of End($\mathfrak m)\otimes
\Lambda^2\mathfrak m^*$ and in coordinates, are
 \begin{equation}\label{Cosets:canCurvature1}
 \begin{aligned}
  R^+ &= -\text{ad}(I_a)\circ\pi_\mathfrak k\circ \text{ad}(I_b) e^a\wedge e^b, \qquad
    {(R^+)^c}_{dab} = 2f^c_{k[a}f^k_{b]d},\\
  R^-&=-\frac 12 f^k_{ab}\, \text{ad}_\mathfrak m (I_k) e^a\wedge e^b,\qquad\qquad \ \ {(R^-)^c}_{dab} =-f^k_{ab}
f^c_{kd},
\end{aligned}\end{equation}
 with $\pi_\mathfrak k:\mathfrak g\rightarrow \mathfrak k$ the orthogonal projection.
For the Bianchi identity we need to know tr$(R^+\wedge R^+)$ and possibly tr$(R^-\wedge R^-)$. These are given by
\begin{equation}  
\begin{aligned}{}
     \text{tr}_\mathfrak m( R^-\wedge R^-) &= \frac 14\langle I_k,I_l\rangle_\mathfrak m f^k_{ab}f^l_{cd}  e^{abcd}, \\
     \text{tr}_\mathfrak m(R^+\wedge R^+ )&= -\frac 14\langle I_k,I_l\rangle_\mathfrak k f^k_{ab}f^l_{cd}e^{abcd},
 \end{aligned}
\end{equation}
 where we introduced the (negative) Killing form $\langle\cdot.\cdot\rangle_\mathfrak k$ of the subalgebra $\mathfrak k$, and similarly
  $$ \langle I_k,I_l\rangle_\mathfrak m = \text{tr}_\mathfrak m\big(\text{ad}(I_k)\circ \text{ad}(I_l)\big).$$
 Using the result of Lemma \ref{Lem_Hderivatives} we conclude that
 \begin{equation}\label{BianchiSolved}
   \text{tr}\big(R^+\wedge R^+ - R^-\wedge R^-\big) = dH,
 \end{equation} 
 which is almost the Bianchi identity. It looks however as if we need to put $\alpha'=4$ to solve the Bianchi identity, 
but this is due to our arbitrary normalization of the metric on $G/K$. Note that the lhs. of \eqref{BianchiSolved} is
completely scale-independent, whereas the rhs. scales with the same factor as the metric. Therefore the Bianchi identity
really fixes the scale in terms of $\alpha'$.

\paragraph{Spinors.}
 The spin bundle on a homogeneous manifold is constructed as follows. Ad-invariance of the Killing form implies that
$\mathfrak k$ acts
orthogonally on $\mathfrak m$, giving rise to an embedding
 \begin{equation}
   \text{ad}_\mathfrak m: \mathfrak k \rightarrow  \mathfrak{so}(\mathfrak m),
 \end{equation} 
 which can be composed with the spin representation $dS: \mathfrak{so}(\mathfrak m)\rightarrow
\mathfrak{spin}(\mathfrak m)$, to give $\widetilde{\text{ad}}:= dS\,\circ \,$ad$_\mathfrak m$. We assume
that this lifts to a representation of $K$, a sufficient condition for this being that $K$ is simply-connected. 
Denoting the spinor
space over $\mathfrak m$ by $S$ (also $S(\mathfrak m)$ occasionally), we get an associated bundle 
 \begin{equation}
   \mathcal S = G\times_K S,
 \end{equation} 
 which is the spinor bundle over $G/K$. The connections we considered before give rise to connections on $\mathcal S$,
and the parallel sections w.r.t. $\nabla^-$ correspond to $K$-invariant elements of $S$. To determine whether there
exist parallel spinors we thus need to know whether the trivial representation of $K$ (or $\mathfrak k$) occurs
in the decomposition of the spinor representation $S$ over $\mathfrak m$ into irreducibles, which is a
purely algebraic task. 

\bigskip

 Suppose then that $\varepsilon$ is parallel w.r.t. $\nabla^-$, so that also $R^-\varepsilon =0$. Then it follows from the symmetry property $R^-_{abcd}=R^-_{cdab}$ that $R^-$ annihilates $\varepsilon$ under the Clifford action as well, $\gamma(R^-)\varepsilon=0$, which makes $R^-$ a candidate for $F$ solving the gaugino equation $\gamma(F)\varepsilon=0$. We have seen before that it is also an excellent candidate to solve the Bianchi identity.  

\bigskip

 The following commutation relation between the quantized 3-form and elements of $\mathfrak k$ acting on spinors over $\mathfrak m$ will be useful:
\begin{lem}\label{Lem_HkComRel}
 For $X\in \mathfrak k$ we have
 \begin{equation}
   [\gamma(H),\widetilde{\text{ad}}(X)]=0.
 \end{equation} 
\end{lem}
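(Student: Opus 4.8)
The plan is to exhibit the quantization map $\gamma$ as an intertwiner between the natural action of $\mathfrak{so}(\mathfrak m)$ on exterior forms and the adjoint action of $\mathfrak{spin}(\mathfrak m)$ on the Clifford algebra, and then to feed in the $K$-invariance of $H$ established in the paragraph on the 3-form.

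Concretely, I would first record the equivariance identity
\[
  [dS(A),\,\gamma(\omega)] \;=\; \gamma(A\cdot\omega), \qquad A\in\mathfrak{so}(\mathfrak m),\ \omega\in\Lambda^\bullet\mathfrak m^*,
\]
where $A\cdot\omega$ is the derivation extension to $\Lambda^\bullet\mathfrak m^*$ of the action of $A$ on $\mathfrak m^*$. For one-forms this is just the statement that $dS$ covers the vector representation, $[dS(A),\gamma(v)]=\gamma(Av)$; for a general $\omega$ it follows because $x\mapsto[dS(A),x]$ is a derivation of $\mathrm{Cl}(\mathfrak m)$, so on a product $\gamma^{a_1}\cdots\gamma^{a_p}$ built from an orthonormal family it acts termwise, and the pieces in which $A$ produces a component along another of the chosen directions cancel in pairs, leaving exactly $\gamma$ of the derivation action on $e^{a_1}\wedge\cdots\wedge e^{a_p}$.

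With this in hand I would specialise $A=\mathrm{ad}_{\mathfrak m}(X)$ for $X\in\mathfrak k$, so that $dS(A)=\widetilde{\mathrm{ad}}(X)$ by definition of $\widetilde{\mathrm{ad}}$, and take $\omega=H$, obtaining
\[
  [\widetilde{\mathrm{ad}}(X),\,\gamma(H)] \;=\; \gamma\big(\mathrm{ad}_{\mathfrak m}(X)\cdot H\big).
\]
The argument of $\gamma$ on the right is precisely the infinitesimal action of $X$ on $H\in\Lambda^3\mathfrak m^*$, which vanishes because $H$ is $K$- (equivalently $\mathfrak k$-) invariant. Hence $[\widetilde{\mathrm{ad}}(X),\gamma(H)]=0$, and therefore $[\gamma(H),\widetilde{\mathrm{ad}}(X)]=0$, as claimed. (Alternatively the lemma can be verified by a direct Clifford-algebra computation from $\gamma(H)=-f_{abc}\gamma^a\gamma^b\gamma^c$ and the quadratic expression $\widetilde{\mathrm{ad}}(I_k)\sim(\mathrm{ad}\,I_k)_{ab}\gamma^a\gamma^b$, collapsing the resulting products of two structure constants by means of the Jacobi identity; but the module-theoretic route above is cleaner and coordinate-free.)

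The one point deserving care is the justification of the equivariance identity in degrees $>1$ — that the derivation commutator with $dS(A)$ stays inside $\mathrm{Cl}(\mathfrak m)$ and equals $\gamma$ of the derivation action, without extra lower-degree terms. This is standard, but it is the only place where a short verification is really needed; once it is granted, the lemma is an immediate consequence of the already-established invariance of $H$.
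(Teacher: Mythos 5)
Your proof is correct and is essentially the paper's own argument: the paper likewise computes $[\gamma(H),\widetilde{\mathrm{ad}}(X)]=-3\,\mathrm{ad}(X)^a{}_b f_{acd}\gamma^{bcd}=\gamma(\mathrm{ad}(X)\cdot H)$ and then invokes the $\mathfrak k$-invariance of $H$. You merely phrase the key step as the general degree-$p$ equivariance identity $[dS(A),\gamma(\omega)]=\gamma(A\cdot\omega)$ rather than verifying it directly for the 3-form, which is a cosmetic difference.
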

\begin{proof}
 A simple calculation in the Clifford algebra shows that
 \begin{equation}
   [\gamma(H),\widetilde{\text{ad}} (X)] = -3{\text{ad}(X)^a}_b f_{acd}\gamma^{bcd},
 \end{equation} 
 but this is proportional to $\gamma(\text{ad}(X)\cdot H)$, where $\cdot $ denotes the action of $\mathfrak{so}(\mathfrak m)$ on $\Lambda^3 \mathfrak m^\ast$, and we know that $H$ is invariant under this action of $\mathfrak k$.
\end{proof}
 This means that $\gamma(H)$ leaves the set of invariant elements in $S(\mathfrak m)$ invariant, so if there is only one invariant spinor, $\gamma(H)$ maps it to a multiple of itself.

\paragraph{The dilaton.}
 In the supergravity equations only the differential $d\phi$ occurs, and if we impose homogeneity again it gives rise
 to a $K$-invariant element of $\mathfrak m^*$. Often there do exist $K$-invariant elements
in $\mathfrak m^*$ if the rank of $K$ is smaller than the rank of $ G$, which correspond to Cartan generators
orthogonal to $\mathfrak k$, but the associated 1-forms on $G/K$ are not exact, and therefore not suitable for our
purpose. We have to conclude that $d\phi=0$
is the only admissible solution for the dilaton.\\
\indent  On the other hand we have seen that a vanishing dilaton is not compatible with positive scalar curvature, which is why we will have to introduce a linear dilaton on an additional $\mathbb R^p$ factor of the total
manifold to obtain a solution of all the supergravity equations. 

\paragraph{Symmetric spaces.}
 Suppose $G/K$ is symmetric, meaning that $[\mathfrak m,\mathfrak m]\subset \mathfrak k$. Then $H=0$, and from the dilatino equation we also have that $d\phi=0$. The equation $\nabla\eta=0$ then tells us that there is a parallel spinor, implying that $M$ is Ricci-flat \cite{Agr06}, which is impossible for symmetric spaces with $G$ semisimple. Thus there are no solutions for symmetric spaces. \\
 \indent On the other hand, it is far from obvious to me why the relation $[\mathfrak m,\mathfrak m]\subset \mathfrak k$ implies that the trivial $\mathfrak k$-representation does not occur in $S(\mathfrak m)$, and a purely Lie algebraic proof would be desirable. 

\section{Spinors on cosets}\label{sec_Spinors}
\paragraph{Representation-theoretic method.}
 Given a coset $G/K$ we need to determine whether the spin representation over $\mathfrak m=\mathfrak g/\mathfrak k$
contains the trivial $\mathfrak k$-representation as an irreducible component.
Suppose for the moment that $\mathfrak k$ is
simple, and denote the set of weights of $\mathfrak m$ by $\Omega(\mathfrak m)$, whereas $\Omega^+(\mathfrak m)$
contains only
the positive ones. Then the weights that appear in $S(\mathfrak m)$ are of the form
 \begin{equation}
  \Omega(S(\mathfrak m)) = \bigg\{ \frac 12\sum_{\alpha\in \Omega^+(\mathfrak m)} \pm \alpha \bigg\},
 \end{equation}
 where all combinations of signs appear (this can be understood from Lemma \ref{Lem_gaugedCartanAlgActionSpinors}
below). Now only the dominant weights can be highest weights of an irreducible representation of $S(\mathfrak m)$,
so it is often enough to determine all the dominant weights in $\Omega(S(\mathfrak m))$. Then a couple of situations can
occur. If the zero weight is not in $\Omega(S(\mathfrak m))$, then the trivial representation is not contained. If it is, one can sometimes conclude by dimensional reasoning that the trivial representation must or must not occur as a component. Sometimes the situation is even simpler:

\bigskip

 Consider the coset SO($n+1)/$SO$(n)=S^{n}$. In this case $\mathfrak m$ is simply the fundamental representation of
$\mathfrak {so}(n)$, and it follows that $S(\mathfrak m)$ is the (Dirac) spinor representation, which is often
reducible, but does not have invariant elements. Thus $S^n$ with its standard round metric does not admit a homogeneous
solution, in accordance with our general result for symmetric spaces.

\bigskip
 
 Despite its elegance we will not employ the representation-theoretic method in the following, but use the more 
down-to-earth approach explained in the next paragraph. The reason for that is that the latter method allows us to
determine how the three-form $H$ acts on invariant spinors, and thereby how to choose the dilaton appropriately to solve
also the dilatino equation $\gamma(d\phi-\frac 1{12} H)\varepsilon =0$. A drawback is that the method is not always
applicable, as explained below.

\paragraph{Direct method for lower-rank subgroups.}
 Recall that the action of $\mathfrak k$ on $\mathfrak m$ defines an embedding $\mathfrak k\subset
\mathfrak{so}(\mathfrak m)$. 
Here we give an explicit construction of the spinor space for the case that $\mathfrak k\subset \mathfrak {su}(\mathfrak
m)$, for a well-chosen complex structure on $\mathfrak m$. We assume
for the time being that rk$(\mathfrak k)<$rk$(\mathfrak g)$, although for certain maximal rank subgroups we will be
able 
to generalize our construction. A particular example where this is possible is the case where $G/K$ is a six-dimensional
nearly K\"ahler manifold. 

\bigskip
 
 A further assumption we want to make is that $\mathfrak k$ and $\mathfrak g$ admit a common root space decomposition, i.e. they are of the form
 \begin{equation}\label{CommonRtSpDec}
  \begin{aligned}{}
       \mathfrak g\otimes \mathbb C &= \mathfrak h \bigoplus_{\alpha\in R^+} \big(\mathfrak g_\alpha \oplus \mathfrak
g_{-\alpha}\big),  \\
       \mathfrak k \otimes \mathbb C &= \text{span}\{H_{p+1},\dots, H_r \} \bigoplus _{\alpha \in S^+} \big( \mathfrak
g_\alpha \oplus \mathfrak g_{-\alpha}\big),
  \end{aligned}
 \end{equation}  
 where the Cartan algebra of $\mathfrak g$ is 
 \begin{equation}
  \mathfrak h = \text{span}\{H_1,\dots, H_r \}
 \end{equation} 
 with $r=$ rk$(\mathfrak g)>$ rk$(\mathfrak k)=r-p$, and the set of positive roots of $\mathfrak g$ is denoted by $R^+$, whereas those of $\mathfrak k$ are contained in $S^+\subset R^+$. The complement $\mathfrak m$ of $\mathfrak k$ in $\mathfrak g$ is then given by
   \begin{equation}
  \mathfrak  m \otimes \mathbb C= \text{span}\{H_{1},\dots,H_p\} \bigoplus_{\alpha\in R^+\setminus S^+}  \big( \mathfrak
g_\alpha \oplus \mathfrak g_{-\alpha}\big),
 \end{equation} 
 which is usually not a Lie algebra. Our final assumption is that the roots in $R^+\setminus S^+$ are higher than  
 those in $S^+$ (which roots are highest, i.e. more positive than others, is
a convention, but it is not always possible to choose the roots in this way; typically if
rk$(\mathfrak k)=$ rk$(\mathfrak g)$ then this ordering is not possible). By restriction, we will consider roots of $\mathfrak g$ as roots of $\mathfrak k$ as well.
 Further we assume a basis of root vectors $E_\alpha$ of the $\mathfrak g_\alpha$ and $E_{-\alpha}$ of the $\mathfrak
g_{-\alpha}$
 to be chosen, obeying the commutation relation
   \begin{equation}\label{rootVectorComRel}
   [E_\alpha,E_\beta] =\begin{cases}
                         2\sum_j\alpha(H_j)H_j & \text{if }\alpha=-\beta \\
             N_{\alpha\beta} E_{\alpha+\beta} & \text{if }\alpha+\beta \text{ is a root}\\
                         0 & \text{otherwise},
                       \end{cases}
 \end{equation}
 where $N_{\alpha\beta}$ are constants, and the normalization can be chosen as \cite{CFT}
 \begin{equation}
   g(H_i,H_j)=\delta_{ij},\qquad g(E_\alpha,E_\beta) = 2\delta_{\alpha,-\beta},\qquad g(H_i,E_\alpha)=0.
 \end{equation}
 Now extend the metric of $\mathfrak m$ to one on $\mathfrak m':=\mathfrak m \oplus \mathbb R^p$, by using the standard metric on $\mathbb R^p$. Furthermore we extend the action of $\mathfrak k$ on $\mathfrak m$ trivially to one on $\mathfrak m\oplus \mathbb R^p$, so that we still have
 \begin{equation}
  \mathfrak k \subset \mathfrak{so}(\mathfrak m'),
 \end{equation}  
 and the spin representation $dS: \mathfrak {so}(\mathfrak m') \rightarrow \mathfrak{spin}(\mathfrak m')$ embeds
 $\mathfrak k$ into the spin algebra. A complex structure $J$ on $\mathfrak m'$ is obtained by defining holomorphic
vectors to be given by the positive root vectors $E_\alpha$, $\alpha \in R^+\setminus S^+$, as well as the following
combinations of Cartan vectors $H_j$ and standard basis vectors $\partial_j$ of $\mathbb R^p$:
 \begin{equation}
   E_j = H_j - i \partial_j,\qquad j=  1,\dots,p, 
 \end{equation} 
 whereas anti-holomorphic ones are given by the negative root vectors $E_{-\alpha}$ and the $ \overline E_j = H_j + i
\partial_j$. The $H_j$ are assumed to be elements of the real Lie algebra $\mathfrak g$, so that roots w.r.t. them are
purely imaginary. From the commutation relation \eqref{rootVectorComRel} and our ordering of the roots
 we deduce the following relations 
 \begin{equation}\begin{aligned}
  \label{sec:Cosets_CplStr}
  [\mathfrak k,\mathfrak m^+]&\subset \mathfrak m^+,\qquad\quad [\mathfrak k,\mathfrak m^-]\subset \mathfrak m^-, \\
  [\mathfrak m^+,\mathfrak m^+] &\subset \mathfrak m^+,\qquad [\mathfrak m^-,\mathfrak m^-]\subset \mathfrak
m^-,
 \end{aligned}   \end{equation}
 where $\mathfrak m^+$ consists of holomorphic vectors in $\mathfrak m'\otimes \mathbb C$, and $\mathfrak m^-$ of 
anti-holomorphic ones. Thus our subalgebra $\mathfrak k$ commutes with the complex structure, and is contained
in the unitary subalgebra of $\mathfrak{so}(\mathfrak m')$:
 \begin{equation}
   \mathfrak k \subset \mathfrak u(\mathfrak m',J).
 \end{equation} 
 It follows from this that the complex structure $J$ extends to an almost-complex structure on the manifold
$\mathbb R^p\times G/K$, which is just a different way of saying that its structure group is contained in $U(\mathfrak m')$. 
Let $e^\alpha$ be the holomorphic 1-forms dual to the $E_\alpha$, and $e^j$ be dual to $E_j$, and denote by $W$ the 
space of all holomorphic 1-forms on $\mathfrak m'$. Then the spinor space is the exterior algebra 
 \begin{equation}
   S=  \Lambda W.
 \end{equation}  
 The Clifford algebra Cl$(\mathfrak {m'}^*)$ is the quotient of the tensor algebra over ${\mathfrak m'}^*$ by the relation 
 \begin{equation}
    v\otimes w - w\otimes v = 2g^{-1}(v,w), \qquad \forall v,w \in {\mathfrak m'}^*,
 \end{equation}   
 where $g^{-1}$ is the metric induced by $g$ on the dual ${\mathfrak m'}^*$. As a vector space the Clifford algebra is isomorphic to the exterior algebra $\Lambda \mathfrak {m'}^*$, but not as an algebra of course. The vector space isomorphism is just the quantization map $ \gamma$. Cl$(\mathfrak {m'}^*)$ acts on $S$ as follows; associate to every
holomorphic 1-form $e^\alpha$ an operator $\zeta^\alpha:=\gamma(e^\alpha)$, an operator $\overline
\zeta^\alpha=\gamma(e^{-\alpha})$ to every anti-holomorphic 1-form $e^{-\alpha}$, as well as $\zeta^j=\gamma(e^j)$ to
$e^j$ and $\overline \zeta^j=\gamma(\overline e^j)$ to $\overline e^j$. They act on $S=\Lambda W$ as creation and
annihilation operators:
\begin{equation}
 \begin{aligned}{}
  \zeta^a \cdot (w^{b_1}\wedge \dots \wedge w^{b_r}) &= w^a \wedge w^{b_1} \wedge \dots \wedge w^{b_r},\\
  \overline \zeta^a \cdot( w^{b_1}\wedge \dots\wedge w^{b_r}) &= \sum_{i=1}^ r (-1)^{i-1} \delta^{a b_i}  w^{b_1} \wedge
\dots\wedge \breve w^{b_i} \wedge \dots \wedge w^{b_r}, 
 \end{aligned}
\end{equation}
 where $\breve w^{b_i}$ means leaving out the element, and the $w^a$ are elements of $W$. This action can be uniquely extended to one of the full Clifford algebra. Inside the Clifford algebra Cl($2n):=$ Cl($\mathbb R^{2n})$ sits the spin algebra $\mathfrak{spin}(2n)$, which is the image of the map
 \begin{equation}
  dS : \mathfrak{so}(2n) \rightarrow \text{Cl}(2n),\ A\mapsto \frac 14 g_{ac}{A^c}_b \gamma^a \gamma^b,
 \end{equation}
 where $\gamma^a = \gamma(dx^a)$.
  \ One can check that this action of the orthogonal Lie algebra restricts to the usual action of $\mathfrak {su}(n)$ on $\Lambda \mathbb C^{n*}$, in particular the decomposition
 $$ S =\oplus_l S^l =\mathbb C \oplus \mathbb C^{n*} \oplus \Lambda^2 \mathbb C^{n*} \oplus \dots \oplus  \Lambda^n \mathbb C^{n*}$$
 is preserved by this subalgebra, and then even by $\mathfrak u(n)$, although the additional $\mathfrak u(1)$-component acts in a non-standard way. To show that $\mathfrak k$ leaves at least one spinor invariant, it is therefore enough to show that it is contained in $\mathfrak{su}(\mathfrak m')$, as this acts trivially on $S^0\oplus S^n$. We can easily determine the action of the Cartan generators of $\mathfrak k$ on spinors (recall $\widetilde{\text{ad}}(X)=$ dS $\circ $ ad$(X)$ for $X\in \mathfrak k$).
 \begin{lem}\label{Lem_gaugedCartanAlgActionSpinors}
  We have for the $H_j\in \mathfrak k$, $j={p+1},\dots,r$:
    \begin{equation}
  \widetilde{\text{ad}}(H_j) = \frac 12 \sum_{\alpha\in R^+\setminus S^+} \alpha(H_j)\big( \overline \zeta^\alpha \zeta^\alpha - \zeta^\alpha \overline \zeta^\alpha\big).
 \end{equation} 
  \end{lem}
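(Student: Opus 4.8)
The plan is to evaluate $\widetilde{\text{ad}}(H_j)=dS\big(\text{ad}_{\mathfrak m'}(H_j)\big)$ by first diagonalising $\text{ad}(H_j)$ on $\mathfrak m'\otimes\mathbb C$ and then pushing the result through the spin representation $dS(A)=\frac14 g_{ac}{A^c}_b\,\gamma^a\gamma^b$. Since $H_j$ lies in the Cartan subalgebra $\mathfrak h$ and the $\mathfrak k$-action on $\mathfrak m'$ is the trivial extension to $\mathbb R^p$ of the action on $\mathfrak m$, the endomorphism $\text{ad}(H_j)$ annihilates $H_1,\dots,H_p$ and $\partial_1,\dots,\partial_p$, hence the whole complex planes $\mathbb C E_i\oplus\mathbb C\overline E_i$, $i=1,\dots,p$. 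On the remainder it is diagonal in the root basis: $\text{ad}(H_j)E_\alpha=\alpha(H_j)E_\alpha$ and $\text{ad}(H_j)E_{-\alpha}=-\alpha(H_j)E_{-\alpha}$ for $\alpha\in R^+\setminus S^+$, with $\alpha(H_j)$ purely imaginary. Thus $\text{ad}(H_j)$ is a sum of mutually commuting infinitesimal rotations, one in each real $2$-plane spanned by the $\alpha$-th hyperbolic pair $E_\alpha,E_{-\alpha}$.

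Because $dS$ is a homomorphism of Lie algebras and these rotations commute, it suffices to compute $dS$ on a single $\alpha$-plane and then sum. Here I would pass from the real orthonormal description to the Witt basis: the normalisation $g(E_\alpha,E_{-\alpha})=2$ forces $g^{-1}(e^\alpha,e^{-\alpha})=\tfrac12$, so the operators $\zeta^\alpha=\gamma(e^\alpha)$ and $\overline\zeta^\alpha=\gamma(e^{-\alpha})$ satisfy $\{\zeta^\alpha,\overline\zeta^\alpha\}=1$, consistent with their acting on $S=\Lambda W$ as creation and annihilation operators. Writing the two real gamma matrices of the $\alpha$-plane through $\zeta^\alpha\pm\overline\zeta^\alpha$ and substituting into $dS(A)=\frac14 g_{ac}{A^c}_b\,\gamma^a\gamma^b$, a short Clifford-algebra computation gives
\begin{equation*}
 dS\big(\text{ad}(H_j)\big|_{\alpha\text{-plane}}\big)=\tfrac12\,\alpha(H_j)\big(\overline\zeta^\alpha\zeta^\alpha-\zeta^\alpha\overline\zeta^\alpha\big).
\end{equation*}
Summing over $\alpha\in R^+\setminus S^+$ — the only planes on which $\text{ad}(H_j)$ is non-zero — yields the asserted formula.

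The step that requires care is the bookkeeping of conventions in the previous paragraph: the factor $2$ in $g(E_\alpha,E_{-\alpha})$, the resulting normalisation of $\zeta^\alpha$ relative to the real gammas, the purely imaginary nature of $\alpha(H_j)$, and the choice of which root vectors are declared holomorphic all have to be tracked so that the coefficient emerges exactly as $\tfrac12\,\alpha(H_j)$ in front of $\overline\zeta^\alpha\zeta^\alpha-\zeta^\alpha\overline\zeta^\alpha$, with no stray factor of $i$ and the correct overall sign. A convenient consistency check is to apply both sides to the vacuum $1\in\Lambda W$: the right-hand side gives $\tfrac12\sum_{\alpha\in R^+\setminus S^+}\alpha(H_j)$, which is exactly half the $\mathfrak u(1)$-charge, i.e. the trace of $\text{ad}(H_j)$ on the holomorphic subspace $\mathfrak m^+$, and it vanishes precisely when $\mathfrak k\subset\mathfrak{su}(\mathfrak m')$, in agreement with the subsequent discussion.
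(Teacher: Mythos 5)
Your proof is correct and follows essentially the same route as the paper: diagonalize $\text{ad}(H_j)$ in the root basis with eigenvalues $\pm\alpha(H_j)$, feed it into $dS(A)=\frac14 g_{ac}{A^c}_b\gamma^a\gamma^b$, and let the normalization $g(E_\alpha,E_{-\alpha})=2$ produce the coefficient $\tfrac12$. The only cosmetic difference is that the paper evaluates the $dS$ formula directly in the complexified Witt basis (where $g_{\alpha,\overline\beta}=2\delta_{\alpha\beta}$ immediately gives $\tfrac14\cdot 2=\tfrac12$), rather than detouring through the real gamma matrices of each $\alpha$-plane; your vacuum-state consistency check matches the paper's subsequent Proposition.
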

 \begin{proof}
 According to the definition of $dS$ we get
  \begin{align*}
   \widetilde{\text{ad}}(H_j) & =\frac 14 g_{ac}  {(H_j )^c}_b\gamma^a\gamma^b  \nonumber\\
  &= \frac 14 g_{\alpha\overline \gamma} {(H^j)^{\overline\gamma}}_{\overline \beta} \zeta^\alpha \overline \zeta^\beta + \frac 14 g_{\overline \beta \gamma} {(H^j)^{\gamma}}_{\alpha}  \overline \zeta^\beta\zeta^\alpha,
 \end{align*}
 where we introduced $[H_j,E_\alpha]=:{(H_j)^\beta}_\alpha E_\beta$. But this just means that ${(H_j)^\beta}_\alpha= {\delta^\beta}_\alpha \alpha(H_j)$, and ${(H_j)^{\overline \beta} }_{\overline \alpha} = -{\delta^\beta}_\alpha \alpha(H_j)$. Furthermore $g_{\alpha,\overline \beta}=g_{\alpha,-\beta}=2\delta_{\alpha\beta}$.
 \end{proof}

\begin{prop}
 Under the assumptions of this section suppose that 
 \begin{equation}\label{RootSumZeroCond}
  \sum_{\alpha \in R^+\setminus S^+} \alpha =0
 \end{equation} 
 as roots of $\mathfrak k$. Then $\mathfrak k$ acts trivially on the completely empty and completely filled states 
 $$ S^0 \subset S,\qquad \text{and} \qquad S^{m} \subset S,$$
 where $m=\dim(\mathfrak m')$.
\end{prop}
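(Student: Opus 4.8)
The plan is to show that each Cartan generator $H_j \in \mathfrak k$ acts as zero on the extreme states $S^0$ (the ``vacuum'' $1 \in \Lambda W$) and $S^m$ (the top form $w^{a_1}\wedge\dots\wedge w^{a_m}$), and that this, combined with the fact established in the text that $\mathfrak k \subset \mathfrak u(\mathfrak m', J)$, forces the whole of $\mathfrak k$ to annihilate these two lines. First I would invoke Lemma \ref{Lem_gaugedCartanAlgActionSpinors}: since $\overline\zeta^\alpha$ annihilates the vacuum and $\zeta^\alpha\overline\zeta^\alpha$ acts on $1$ as multiplication by the eigenvalue of the number operator (zero on $S^0$), one computes $\widetilde{\text{ad}}(H_j)\cdot 1 = \tfrac 12\sum_{\alpha\in R^+\setminus S^+}\alpha(H_j)(0 - 1)\cdot 1 = -\tfrac 12\big(\sum_\alpha \alpha\big)(H_j)\cdot 1$, which vanishes precisely by hypothesis \eqref{RootSumZeroCond}. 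On the top state $S^m$ every creation operator $\zeta^\alpha$ kills it while $\overline\zeta^\alpha\zeta^\alpha$ measures $1$, so the sign flips and one gets $+\tfrac 12\big(\sum_\alpha\alpha\big)(H_j)$ acting on the top form, again zero. So the Cartan part of $\mathfrak k$ fixes both lines.

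Next I would handle the root vectors $E_\beta$ for $\beta \in S^+$ (and their conjugates $E_{-\beta}$). Here the key is the ordering assumption: the roots in $R^+\setminus S^+$ are higher than those in $S^+$. Under $\widetilde{\text{ad}}(E_\beta) = dS(\text{ad}(E_\beta))$, the operator $\text{ad}(E_\beta)$ maps a holomorphic root vector $E_\alpha$ (with $\alpha\in R^+\setminus S^+$) to a multiple of $E_{\alpha+\beta}$; by the height ordering $\alpha+\beta$, if a root, again lies in $R^+\setminus S^+$, so $\text{ad}(E_\beta)$ preserves $\mathfrak m^+$ and hence $dS(\text{ad}(E_\beta))$ is a sum of terms $\zeta^{\alpha+\beta}\overline\zeta^\alpha$ (a ``creation$\circ$annihilation'' monomial, no purely-creation or purely-annihilation pieces). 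Such a monomial annihilates $S^0$ because of the trailing $\overline\zeta^\alpha$, and annihilates $S^m$ because of the leading $\zeta^{\alpha+\beta}$ (the top form already contains that factor). The cleanest way to package this is exactly the remark after Lemma \ref{Lem_HkComRel} together with the structure of $\mathfrak u(\mathfrak m')$: $S^0\oplus S^m$ is the trivial $\mathfrak{su}(\mathfrak m')$-summand, $\mathfrak k\subset\mathfrak{so}(\mathfrak m')$ and $\mathfrak k$ commutes with $J$, so it suffices to check that $\mathfrak k$ is in fact contained in $\mathfrak{su}(\mathfrak m')$ rather than merely $\mathfrak u(\mathfrak m')$ — equivalently that the trace of each $\text{ad}_{\mathfrak m'}(X)$, $X\in\mathfrak k$, over $\mathfrak m^+$ vanishes. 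For the root vectors this trace is manifestly zero (they are strictly off-diagonal on $\mathfrak m^+$), and for the Cartan generators it is $\sum_{\alpha\in R^+\setminus S^+}\alpha(H_j)$, which vanishes by \eqref{RootSumZeroCond}. Since $\mathfrak{su}(\mathfrak m')$ acts trivially on $S^0\oplus S^m$, so does $\mathfrak k$.

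I would then assemble the argument: $\mathfrak k$ is spanned by the $H_j$, $j=p+1,\dots,r$, and the $E_{\pm\beta}$, $\beta\in S^+$; each acts as zero on $S^0$ and on $S^m$ by the two computations above, hence $\mathfrak k$ (and by exponentiation $K$, under the assumed lifting hypothesis) fixes these two one-dimensional subspaces pointwise. Strictly speaking it is cleaner to state the conclusion at the Lie-algebra level, which is all that is needed for the existence of a $\nabla^-$-parallel spinor via Lemma \ref{lem_minusholonomy}.

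The main obstacle, and the step I would be most careful about, is the claim that $\text{ad}(E_\beta)$ with $\beta\in S^+$ genuinely preserves $\mathfrak m^+$ — this is where the height ordering of $R^+\setminus S^+$ over $S^+$ is doing real work, and one must check that $\alpha+\beta$ cannot drop into $S^+$ or become a negative root or a Cartan direction. (It could of course fail to be a root at all, in which case the bracket is zero, which is harmless.) The subtlety is compounded by the extra $\mathbb R^p$ directions: $\text{ad}(E_\beta)$ must also act correctly on the $E_j = H_j - i\partial_j$, and since $\beta\in S^+$ is a root of $\mathfrak k$ it annihilates $\partial_j$ and sends $H_j$ into $\mathfrak g_\beta$; one needs this to land in $\mathfrak k\otimes\mathbb C$ (indeed it does, as $[\mathfrak k,\mathfrak k]\subset\mathfrak k$) so that it does not contribute a stray creation/annihilation operator mixing in $\mathfrak m^\pm$. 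Once these bookkeeping points are nailed down the rest is routine Clifford algebra.
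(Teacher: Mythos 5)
Your proof is correct and follows essentially the same route as the paper: it likewise uses $\mathfrak k\subset\mathfrak u(\mathfrak m')$ to see that $S^0$ and $S^m$ are one-dimensional $\mathfrak k$-subrepresentations and then applies Lemma \ref{Lem_gaugedCartanAlgActionSpinors} to the Cartan generators, except that the paper dispatches the root vectors $E_{\pm\beta}$ in one stroke (on a one-dimensional representation every commutator, hence every root vector, acts by zero) where you verify it by the explicit $\zeta^{\alpha+\beta}\overline\zeta^{\alpha}$ computation. One cosmetic slip: in $\overline\zeta^\alpha\zeta^\alpha-\zeta^\alpha\overline\zeta^\alpha$ the rightmost operator acts first, so the eigenvalue is $+\frac 12\sum_{\alpha}\alpha(H_j)$ on $S^0$ and $-\frac 12\sum_{\alpha}\alpha(H_j)$ on $S^m$ — the transpose of your signs — which is immaterial here since the sum vanishes by hypothesis.
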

\begin{proof}
 We know already that $\mathfrak k\subset \mathfrak u(\mathfrak m')$ leaves invariant the decomposition $S=\oplus_l
S^l$. 
The subrepresentations on $S^0$ and $S^m$ are 1-dimensional, and according to Lemma
\ref{Lem_gaugedCartanAlgActionSpinors} the Cartan elements of $\mathfrak k$ act by multiplication with
  $$ \widetilde{\text{ad}}(H_j)\Big|_{S^0} = \frac 12 \sum_{\alpha \in R^+\setminus S^+} \alpha(H_j), \qquad \widetilde{\text{ad}}(H_j)\Big|_{S^{m}} 
= -\frac 12 \sum_{\alpha \in R^+\setminus S^+} \alpha(H_j).$$
 If these vanish, the representations are trivial.
\end{proof}
 A closer inspection shows that the trivial representation occurs with multiplicity at least $2^{p+2}$, as $\mathfrak k$
 acts trivially on all of the holomorphic vectors $E_j$. The requirement that
the supersymmetry generator in heterotic supergravity be Majorana-Weyl will reduce the amount of supersymmetry
preserved in the space orthogonal to $\mathbb R^p\times G/K$ to $\mathcal N=2^{p}$, whereas the dilatino equation
reduces
it even further.

\bigskip

A different way to write the condition \eqref{RootSumZeroCond} is in terms of the Weyl vector $\rho=\frac 12
\sum_{\alpha\in R^+}\alpha$. Then it reads
 \begin{equation}
  \rho_{\mathfrak g} \big|_{\mathfrak k}=\rho_\mathfrak k.
 \end{equation} 
 As we mentioned already, it is equivalent to $\mathfrak k\subset \mathfrak{su}(\mathfrak m)$.
 The condition cannot be satisfied by a
 subalgebra of maximal rank, as this would imply $\sum_{\alpha \in R^+\setminus S^+} \alpha =0$ as roots of
$\mathfrak g$, contradicting the positivity of all the roots in $R^+\setminus S^+$. In this case one can sometimes
choose the complex structure in a different way, i.e. such that holomorphic vectors do not coincide with positive root
vectors, and apply the procedure presented here analogously. This will be demonstrated for $G_2/$SU(3) below.

\bigskip

 Summarizing the results of this section, we have proven that given a naturally reductive subalgebra 
$\mathfrak k\subset \mathfrak g$ of lower rank, admitting a common root space decomposition with $\mathfrak g$ of the
form \eqref{CommonRtSpDec} such that the roots not belonging to $\mathfrak k$ are all higher than those of $\mathfrak
k$, and such that their sum acts trivially on the Cartan generators of $\mathfrak k$, it admits invariant spinors over
$\mathfrak m\oplus \mathbb R^p$, where $p= $ rk($\mathfrak g)-$ rk($\mathfrak k$). \\
 \indent With Lemma \ref{lem_minusholonomy} we conclude that there are invariant, and thereby $\nabla^-$-parallel spinors over $\mathbb R^p \times G/K$. Furthermore we have a very explicit construction of these spinors, which will allow us to solve also the dilatino
equation in this case.
 
\paragraph{The quantized three form.}
 For the dilatino equation we also need to know how the 3-form $H$ acts on the invariant spinors, i.e. we have to 
determine $\gamma(H)\varepsilon$ for $\varepsilon \in S^0 \oplus S^ m$. Using the commutation relations
\eqref{rootVectorComRel} we can determine $H$:
 \begin{equation}\label{3formexplct}
  \begin{aligned}{}
        H&= -\frac 16 f_{abc} e^{abc} \\
  &= 2\sum_{\underset{j=1,\dots,p}{\alpha \in R^+\setminus S^+}} \alpha(H_j) e^{-\alpha}\wedge e^{\alpha} \wedge h^j  \\
  &- 2\sum_{\underset{\scriptstyle{\alpha<\beta}}{\alpha,\beta\in R^+\setminus S^+}} \Big[N_{\alpha\beta} e^{\alpha}\wedge e^\beta\wedge e^{-(\alpha+\beta)} +N_{-\alpha,-\beta} e^{-\alpha}\wedge e^{-\beta}\wedge e^{\alpha+\beta}\Big], 
  \end{aligned}
 \end{equation}
  with $h^j$ dual to $H_j$. Under the quantization map $\gamma $ the last line eliminates $S^0$ and $S^m$, and therefore we only need to consider 
 \begin{equation}
   H'  =  2\sum_{\underset{j=1,\dots,p}{\alpha \in R^+\setminus S^+}} \alpha(H_j) e^{-\alpha}\wedge e^{\alpha} \wedge h^j .
 \end{equation}
 By construction the 1-form $h^j$ is quantized to 
 \begin{equation}
  \gamma(h^j) =  \zeta^j + \overline \zeta^j,
 \end{equation} 
 and we conclude that 
\begin{equation}
   \gamma(H') =6 \sum_{\underset{j=1,\dots,p}{\alpha\in R^+\setminus S^+}} \alpha(H_j)(\overline\zeta^\alpha\zeta^\alpha
- \zeta^\alpha\overline \zeta^\alpha )  (\zeta^j+\overline \zeta^j),
 \end{equation}
 so that
 \begin{align}
   \gamma(H) \Big|_{\mathcal S^0} \ :\  &\ \mathcal S^0 \rightarrow \mathcal S^1,\ 1\mapsto
6\sum_{\underset{j=1,\dots,p}
{\alpha\in R^+\setminus S^+}} \alpha(H_j) w^j ,\nonumber \\
 \gamma(H)\Big|_{\mathcal S^m} \ :\ &\ \mathcal S^m\rightarrow \mathcal S^{m-1}, \\
      w^1&\wedge\dots\wedge w^m\mapsto  6\sum_{\underset{j=1,\dots,p}{\alpha\in R^+\setminus S^+}} (-1)^j\alpha(H_j) 
w^1\wedge \dots\wedge \breve w^j\wedge\dots\wedge w^m.\nonumber
 \end{align}
 Therefore the dilatino equation 
 $ \gamma(d\phi - \sfrac 1{12}H) \varepsilon =0$
 is not satisfied if we choose $\phi=0$ and $\varepsilon \in S^0 \oplus S^ m$. To overcome this problem we can introduce a linear dilaton on the additional $\mathbb R^p$-factor. Take the real valued
 \begin{equation}
   \phi(x) = \frac i{2}\sum_{\underset{j=1,\dots,p}{\alpha\in R^+\setminus S^+}} \alpha(H_j) x^j.
 \end{equation}
 Its differential acts on $\mathcal S$ as
 \begin{equation}
   \gamma(d\phi) = {\frac 12}\sum_{\underset{j=1,\dots,p}{\alpha\in R^+\setminus S^+}} \alpha(H_j) ( \zeta^j-\overline\zeta^j),
 \end{equation}
 which implies
 \begin{equation}
   \gamma(d\phi)\Big|_{\mathcal S^0\oplus \mathcal S^m}  = \frac 1{12} \gamma(H)\Big|_{\mathcal S^0\oplus \mathcal S^m}.
 \end{equation}
 How do we then construct the supersymmetry generator $\varepsilon$? For the whole construction to be useful in 
heterotic string theory, the dimension of $\mathbb R^p \times G/K$ must be less than 10, i.e. we need
 \begin{equation}
  l:= \dim G-\dim K + \text{rk}\,G - \text{rk}\,K <10,
 \end{equation}  
 then we complete this to a 10-dimensional Lorentzian manifold $M= \mathbb R^{9-l+p ,1} \times G/K$. The spinor 
bundle over $M$ is the tensor product of the one over $\mathbb R^{l-1,1}$ and the one over $\mathbb R^p \times G/K$, and
there is a (anti-linear) charge conjugation $C:S\rightarrow S$ mapping $S^0$ to $S^m$ times a complement which does not
affect the
action of forms over $\mathbb R^p \times G/K$. Let $\eta$ be any constant, positive chirality spinor on $\mathbb
R^{l-1,1}$. Then $\varepsilon$ is chosen proportional to $1\otimes \eta + C(1\otimes \eta)$, with $1
\in S^0\simeq \mathbb C$, and this is parallel w.r.t. $\nabla^-$, annihilated by $\gamma(d\phi -\frac 1{12} H)$,
and Majorana-Weyl, as required by heterotic string theory. 

\bigskip

 There is a little sublety related to the amount of supersymmetry preserved. Suppose that the
dimension of $\mathbb R^p\times G/K$ is such that it admits Majorana spinors, and that $\mathcal S^0$ is invariant.
Then there are at least two invariant Majorana spinors, given by 
 \begin{equation}
   1 + C\cdot 1 ,\qquad \text{and} \qquad i(1-C\cdot 1).
 \end{equation} 
 These generate just one supersymmetry however, as two Majorana spinors $\eta + C\eta $ and $i(\eta - C\eta)$ over
$\mathbb R^{l-1,1}$ tensored with our invariant spinors lead to the same set of spinors over the total 10-dimensional
space. 

\bigskip

 What is then the amount of supersymmetry preserved in $\mathbb R^{9-l,1}$? Consider all those
spinors that are
obtained from the completely empty one by acting with an arbitrary number of creation operators $\zeta^j$, $j=1,\dots,
p$. These generate a complex $2^p$ dimensional space $\tilde S=\tilde S^0\oplus \dots \oplus \tilde S^p$, which is
invariant under $\gamma(H)$ and $\gamma(d\phi)$, and we chose
$\phi$ such that $\tilde S^0=S^0$ is annihilated by $\gamma(d\phi - \frac 1{12} H)$. On the higher subspaces $\tilde
S^k$ the condition $\gamma(d\phi - \frac 1{12} H)\psi=0$ becomes one linear equation, so that generically there should
be
$2^p-p$ invariant elements. Indeed one can check that $\tilde S^p$ is not invariant, whereas $\tilde S^2$ has one
invariant spinor, and so on. We conclude that our solutions have 
 \begin{equation}
   \mathcal N\geq  2^p-p
 \end{equation} 
 supersymmetry, with equality if there are no additional invariant spinors over $\mathbb R^p\times G/K$ to those we
constructed explicitly. In
particular for $p=0$ and $p=1$ this leads to $\mathcal N\geq 1$, and these are the cases we shall
consider in the examples below. 
For equal rank subgroups we will have to add a factor $\mathbb R$ by hand on which the linear
dilaton lives, and the orthogonal space will be $\mathbb R^{8-\dim G+\dim K,1}$.

\section{Homogeneous Sasaki-Einstein manifolds}\label{sec_Sasaki}
 This is a particular class of examples where the assumptions on the root space decomposition of $\mathfrak k$ and $\mathfrak g$ are satisfied. A good general reference on Sasaki spaces is the book by Boyer and Galicki \cite{Boyer08}, and a simple definition is to call a Riemannian manifold Sasaki-Einstein, if its metric cone is a Calabi-Yau. Similarly, 3-Sasakian manifolds by definition have hyperk\"ahler cones. According to Theorem 11.1.13 of \cite{Boyer08}, every homogeneous Sasaki-Einstein manifold is the total space of a principal U(1)-bundle over a so-called generalized flag manifold. These latter spaces are by definition cosets of a Lie group $G$ by the centralizer of a torus, and they carry a K\"ahler-Einstein structure \cite{Arvan03}. The classification and an exhaustive list of examples for non-exceptional $G$ can also be found in Arvanitoge\'orgos' book \cite{Arvan03}.
 We now give an explicit method to obtain the U(1)-bundles, which leads to many of the examples considered below. 

\bigskip
 
 Let $ G$ be a again a connected compact simple Lie group with Lie algebra $\mathfrak g$, and Cartan subalgebra $\mathfrak h$. Choose the positive roots, and let $\beta$ be the highest simple root. Denote again by $R^+$ the set of positive roots, and by $S^+\subset R^+$ the set of roots which are linear combinations of all the simple roots except $\beta$. Then $\mathfrak g$ has a root space decomposition of the form
 \begin{equation}
  \mathfrak g \otimes \mathbb C= \mathfrak h \bigoplus_{\alpha \in R^+} \big(\mathfrak g_\alpha \oplus \mathfrak g_{-\alpha}\big), 
 \end{equation}    
 and we define the subalgebras $\mathfrak t\subset\mathfrak c\subset\mathfrak g$:
 \begin{equation}
  \begin{aligned}{}
    \mathfrak c &= \bigg[\mathfrak h \bigoplus_{\alpha \in S^+} \big(\mathfrak g_\alpha \oplus \mathfrak g_{-\alpha}\big)\bigg] \cap \mathfrak g , \\
   \mathfrak t &= \big\{H\in \mathfrak h\ |\ \alpha(H)=0\ \forall \alpha \in S^+\big\} \cap \mathfrak g.
  \end{aligned}
 \end{equation} 
 Furthermore we need the space
 \begin{equation}
   \mathfrak m = \bigg[\mathfrak t \bigoplus_{\alpha \in R^+\setminus S^+}   \big(\mathfrak g_\alpha \oplus \mathfrak g_{-\alpha}\big)\bigg] \cap \mathfrak g .
 \end{equation} 
 These definitions imply the following commutation relations:
 \begin{equation}
  \begin{aligned}{}
    [\mathfrak c,\mathfrak c] &\subset \mathfrak c,\qquad [\mathfrak c,\mathfrak t] =0 , 
   \qquad [\mathfrak c,\mathfrak m] \subset \mathfrak m. 
  \end{aligned}
 \end{equation} 
 Denote by $C,T\subset G$ the corresponding Lie groups (with $T\simeq $ U(1)), then $C$ is the centralizer of $T$ in $G$, and $T\subset C$ is a normal subgroup. Thus $K:=C/T$ is a group again, and its Lie algebra is the orthogonal complement of $\mathfrak t$ in $\mathfrak c$ w.r.t. the Killing form. Due to the relation $(G/K)/T \simeq G/C$ we get a $U(1)$-fibration
 \begin{equation}
   \pi: G/K \rightarrow G/C,
 \end{equation}  
 where the base space carries a homogeneous K\"ahler-structure, with complex structure induced by the choice of positive roots. Let 
 \begin{equation}
   \delta = \frac 12 \sum_{\alpha \in R^+\setminus S^+}\alpha.
 \end{equation} 
 It vanishes on the Cartan generators of $\mathfrak k$. Defining the 1-form $A \in \Omega^1(G/K; \, \mathfrak t$) through
 \begin{equation}
   A = h\otimes H_\mathfrak t,
 \end{equation}  
 with $H_\mathfrak t\in \mathfrak t$ and $h\in \Omega^1(G/K)$ the left-invariant 1-form corresponding to the dual of $H_\mathfrak t$, we get a connection on the bundle $G/K\rightarrow G/C$ \cite{KoNoI}, as $A$ satisfies
 \begin{equation}
   A(H_L) = H_\mathfrak t,\qquad R_t^\ast A = A \ \big(= \text{Ad}(t^{-1})A\big)\qquad \forall t\in T.
 \end{equation} 
 Here $H_L$ is the left-invariant vector field corresponding to $H_\mathfrak t$, and $R_t:G/K\rightarrow G/K,\ [g]\mapsto [gt]$. The curvature $\Omega\in \Omega^2(G/K;\mathfrak t)$
 of this connection is calculated using the Maurer-Cartan equation, and given by:
 \begin{equation}\begin{aligned}{}
   \Omega =dA &=  -2\sum_{\alpha \in R^+\setminus S^+} \alpha(H_\mathfrak t) e^{\alpha}\wedge e^{-\alpha}\, \otimes H_\mathfrak t \\
   & = \frac 2{\delta(H_\mathfrak t)} \sum_{\alpha \in R^+\setminus S^+} (\alpha,\delta) e^\alpha \wedge e^{-\alpha}\,\otimes H_\mathfrak t, 
 \end{aligned} 
\end{equation}  
 with $(\delta,\alpha)$ the inner product on weights induced by minus the Killing form.
 Let $\rho$ be the minimal non-trivial representation of $\mathfrak t$ on $\mathbb C$ which exponentiates to an action
of $T$, it gives rise to an associated line bundle $L\rightarrow G/C$, whose curvature $\Omega^L\in \Omega(G/C, i\mathbb
R)$ is the pull-back of $\rho(\Omega)$ under any set of local sections. We obtain a K\"ahler structure on $G/C$ by the
choice $\omega = \frac i  {2\pi }\,{\Omega^L}$ for the K\"ahler form $\omega$, and according to \cite{Arvan03} the
K\"ahler metric is Einstein. It follows that $L$ is ample, and
 the classification result Theorem 11.1.13 in \cite{Boyer08} (and its proof) imply that $G/K$ carries a Sasaki-Einstein
structure.

\bigskip

 What is more important for us is that the result $\delta |_{\mathfrak k\cap \mathfrak h}=0$ implies that there are
parallel  spinors on $G/K \times \mathbb R$, and all the conditions from our previous section are satisfied. Thus we
find solutions to the heterotic supergravity equations on all of these spaces. It should be mentioned that the
Sasaki-Einstein metric is never the one induced by the Killing form, which we are using. 

\bigskip

 Not all of the examples given below fall into this class however. Besides the Einstein-Sasaki manifolds, we will also 
find solutions on nearly K\"ahler, nearly parallel $G_2$, and 3-Sasakian manifolds. These latter do carry an
Einstein-Sasaki structure as well, but they allow for more parallel spinors, and besides the U(1)-fibration over a
K\"ahler manifold, they also possess an Sp(1)-fibration over a quaternionic K\"ahler manifold. The only example of this
kind we include is Sp$(n+1)/$Sp($n)=S^{4n+3}$, but probably all of them could be used. The
3-Sasakian metric is related to the Killing form in \cite{Bielawski96}, and there is a complete list of the homogeneous
examples in any of \cite{Agr06, Bielawski96, Boyer98, Boyer07, Boyer08}.

\section{Examples}\label{sec_bsp}
 We will treat the cases listed in the introduction, and additionally SU(2) as an example of a group manifold. For the
spaces SU($n+1)/$SU($n$), Sp($n+1)$/Sp$(n)$, Sp$(n)$/SU($n$), SO($2n)$/SU($n)$ and SO$(n+1)$/SO($n-1$) it follows
already from the general discussion about Sasaki-Einstein manifolds above that solutions exist. Nevertheless we give
some details below, which allow us to say more about the amount of supersymmetry preserved and determine it in some
cases, and furthermore to calculate the three-form explicitly. We will not take care to normalize the generators as in
the general discussion above, but otherwise use the same conventions. In particular for a root $\alpha$ we denote by
$e^\alpha$ the 1-form dual to the root vector $E_\alpha$, whereas the dual of a Cartan generator $H_j$ is denoted by
$h^j$. As there will always be just one Cartan generator in $\mathfrak m$, we drop the index and denote its dual simply
by $h$. The generic form of the three-form on the Sasaki spaces is then 
 \begin{equation}
   H = -2\sum_{\alpha\in R^+\setminus S^+} e^\alpha\wedge e^{-\alpha} \wedge h,
 \end{equation} 
 unless relations of the type $\alpha+\beta=\gamma$ exist between three roots in $R^+\setminus S^+$, in which case we
get further contributions from the last line in \eqref{3formexplct}.

\paragraph{SU(2) \& other group manifolds.}
 The simplest examples are the ones with $K$ the trivial group. In this case existence of invariant spinors is trivial 
and need not be checked. The manifold on which our fields live is $\mathbb R^p \times G$, with $p=$ rk$(G$), and the
dilaton is a linear function on $\mathbb R^p$. What is further special about this case is that $H^3(G,\mathbb Z) =
\mathbb Z$, so in order to satisfy the condition $[H]\in H^3(M,4\pi^2 \alpha'\mathbb Z)$ we have to adjust the
scale of $G$, whereas the Bianchi identity is trivially satisfied, with $dH=0$ and $R^+=R^-=0$. 
 For cosets $G/K$ on the other hand the third cohomology may well be trivial, and it is the Bianchi identity
which fixes the scale. 

\bigskip

 The models we obtain here are the low-energy description of Wess-Zumino-Witten models, and well-known.
 SU(2) is the simplest of the group manifolds, with rank 1. Thus we have to consider $\mathbb R\times $SU(2). Choose a
basis $I_1,I_2,I_3$ of the Lie algebra $\mathfrak{su}(2)$ satisfying $[I_i,I_j] = {\epsilon_{ij}}^k I_k$. Then there is
a corresponding basis of left-invariant vector fields on SU(2), which we denote by $I_{1,2,3}$ again, and a dual basis
of left-invariant 1-forms $e^1,e^2,e^3$. 
 The metric on $\mathbb R\times $SU(2) is given by
 \begin{equation}
   g = dx\otimes dx + \sum_j e^j \otimes e^j,
 \end{equation} 
  the $H$-field becomes proportional to the volume form of SU(2):
 \begin{equation}
   H = -e^1\wedge e^2 \wedge e^3,
 \end{equation} 
 and the dilaton is
 \begin{equation}
   \phi(x) = -\frac 12 x.
 \end{equation}  
 Contrary to the Levi-Civita connection, both connections $\nabla^+$ and $\nabla^-$ are flat, i.e. $R^+=R^-=0$, and 
the Bianchi identity $dH=0$ holds. So far all of this remains true for
higher-dimensional groups as well. In this particular case one can check that besides the supersymmetry equations also
the heterotic equations of motion are satisfied (and the higher order corrections in $\alpha'$ vanish). This is not true for simple groups in general, but should not be expected either for
groups of dimension $\geq 8$. The amount of supersymmetry preserved in $\mathbb R^{5,1}$ is $\mathcal N=1$.

\bigskip

 The only condition we did not check so far was the quantization condition on $H$. In order to calculate the cohomology class of $H$, we pull it back to $S^3$ along the map 
 \begin{equation}
  \iota:S^3\rightarrow SU(2),\ (\eta,\xi_1,\xi_2)\mapsto
 \left(
 \begin{array}{cc}
          e^{i\xi_1}\sin(\eta) & e^{i\xi_2}\cos (\eta) \\
         -e^{-i\xi_2}\cos(\eta) & e^{-i\xi_1}\sin(\eta)
\end{array}\right)
 \end{equation}
 \big(where $\eta\in [0,\frac \pi 2),\ \xi_1,\xi_2\in [0,2\pi)$\big) to calculate its integral:
 $$   \int_{SU(2)} H =\int_{S^3}\iota^\ast H=-16 \pi^2.$$
 Thus the cohomology class of $H$ is
  \begin{equation}\label{SU(2)-H-Cohom}
   [H] = -16\pi^2\in H^3\big(\text{SU(2)};\mathbb R\big)\simeq \mathbb R.
  \end{equation}
 To satisfy the integrality constraint we have to rescale the metric as $g'=\mu g$, with
 \begin{equation}
  \mu = \frac {\alpha' n}4,\qquad n\in \mathbb N,
 \end{equation} 
 which automatically rescales $H'=\mu H$ (as $H$ is defined in terms of the metric), and leaves invariant the 
supersymmetry equations. There is thus no continuous volume modulus in the game, as is well-known for WZW models. 
 For higher-dimensional simple groups the analysis is very similar.

\paragraph{SU($\mathbf{n+1)}$/SU($\mathbf{n )=S^{2n+1}}$.}
 Here the difference of the ranks is always 1, so that we get a model based on $\mathbb R\times S^{2n+1}$, where the 
metric on the sphere is different from the round one \cite{Besse87}. A Cartan basis of $\mathfrak{su}(n+1)$ is
given by
\begin{equation}
\lambda_1=\left(\begin{smallmatrix}
 	     i &  && &&  &&\\
             &  -i &&  && &&\\
             &   && 0&& &&\\ 
             & && &&  \ddots &&     \\
             & && && && 0  
   \end{smallmatrix}\right), \quad
\lambda_2=\left(\begin{smallmatrix}
 	     0 && & && && && \\
	     && i &  && &&  &&\\
             && &  -i &&  && &&\\
             && &   && 0&& &&\\ 
             && & && &&  \ddots &&     \\
             && & && && && 0  
   \end{smallmatrix}\right), \dots \ 
 \lambda_n=\left(\begin{smallmatrix}
 	     0 &&  && &&  &\\
             &&  \ddots &&  && &\\
             &&   && 0&& &\\ 
             && && &&  i &     \\
             && && && & -i
   \end{smallmatrix}\right)
\end{equation} 
 with positive root vectors  $E_{ij}$ for $j>i$ and $(E_{ij})_{ab}=\delta_{ai}\delta_{bj}$. The corresponding negative
root vectors are $-E_{ji}$. We have 
 \begin{equation}
  [\lambda_a,E_{ij}] = i[E_{a,a}-E_{a+1,a+1},E_{ij}] = i\big(\delta_{ai}-\delta_{a+1,i}-\delta_{a,j} +
\delta_{a+1,j}\big)E_{ij},
 \end{equation} 
 so that the positive roots with respect to $\lambda_1,\dots,\lambda_n$ are 
 \begin{equation}
   \alpha_{ab} = i(0,\dots,0,\underset{(a-1)}{-1},\underset{(a)}{+1},0,\dots,0,\underset{(b-1)}{+1},\underset
{(b)}{-1},0,\dots,0)_\lambda,
 \end{equation} 
 with $1\leq a<b \leq n +1$ (entries beyond the $n$ components of the root are dropped). The Cartan generators
$\lambda_a$ are not orthogonal though, a better basis is
given by
 \begin{equation}
  H_l = \frac 1{\sqrt{l(l+1)}} \sum_{a=1}^l a\lambda_a,
 \end{equation} 
 or, more explicitly:
  \begin{equation}
  \begin{aligned}\label{su(n+1)Cartanbasis}
   H_1 &= \frac 1{\sqrt 2}\left(\begin {array}{cccc}
 	     i &  & &  \\
             &  -i &   &\\
             &   & & \\ 
             & & &        
         \end {array} \right),\qquad
    H_2 = \frac 1{\sqrt 6}\left(\begin {array}{ccccc}
 	     i & & & & \\
              & i & & &\\
             &   & -2i & & \\
             & &   & &\\ 
             & & &    &    
         \end {array} \right),\ \dots ,\\
     H_n &= \frac 1{\sqrt{n(n+1)}}\left(\begin {array}{cccc}
 	     i &  & & \\
             &  \ddots  & &\\ 
             & &  i & \\ 
             & & & -ni        
         \end {array} \right). 
 \end{aligned}\end{equation}
 The roots w.r.t. these generators are 
 \begin{equation}
\begin{aligned}{}
 \label{su(n+1)roots}
  \alpha_{ab}=i\Bigg(0,\dots,0,\underset{(a-1)} {-\sqrt{\frac{a-1}a}},&\frac 1{\sqrt{a(a+1)}},\frac
1{\sqrt{(a+1)(a+2)}},\dots  \\
  &\dots , \underset{(b-2)}{\frac 1{\sqrt{(b-2)(b-1)}}},\underset{(b-1)}{\sqrt{\frac b{b-1}}},0,\dots,0\Bigg)_H.
\end{aligned}
 \end{equation} 
  The upper left corner $\mathfrak{su}(n)$ subalgebra of $\mathfrak{su}(n+1)$ has 
Cartan generators $H_1,\dots,H_{n-1}$, whereas $H_n$ generates the orthogonal torus. The roots belonging to
$\mathfrak m$ are 
 \begin{equation}
  R^+\setminus S^+ = \{ \alpha_{a,n+1}\ |\ a=1,\dots,n\}.
 \end{equation} 
 Their action on the Cartan generators of $\mathfrak{su}(n)$ is as follows. To every $H_a$ with $1\leq a<n$ we have 
one root $\alpha_{a+1,n+1}$ and $a$ roots $\alpha_{j,n+1}$ with $1\leq j\leq a$ such that
 \begin{equation}
   \alpha_{a+1,n+1}(H_a)=-i\sqrt{\frac a{a+1}},\qquad \alpha_{j,n+1}(H_a) = \frac i{\sqrt{a(a+1)}}.
 \end{equation} 
 Therefore the relation
 \begin{equation}
   \sum_{\alpha \in R^+\setminus S^+} \alpha\ \Big|_{\mathfrak {su}(n)} =0,
 \end{equation} 
 holds, and the conditions of section \ref{sec_Spinors} are satisfied. Thus we can introduce a linear dilaton on
$\mathbb
R$ and find an 
invariant spinor (in both $S^0$ and $S^{n+1}$) satisfying all of the supersymmetry conditions, as well as the Bianchi
identity. The amount of supersymmetry preserved is $\mathcal N=1$, which follows from the fact that
$\mathfrak{su}(n)\subset \mathfrak{spin}(2n)$ has exactly 2 invariant Majorana-spinors (if they exist), and
$\mathfrak{so}(\mathfrak m)= \mathfrak{so}(2n+1)$, whith $\mathfrak{su}(n)$ embedded in the upper left
$\mathfrak{so}(2n)$ in the standard way. As we have no relations between three roots of $\mathfrak m$, the three form is simply
 \begin{equation}
   H  \sim\ \sum_{a=1}^n  e^{\alpha_{a,n+1}} \wedge e^{-\alpha_{a,n+1}}\wedge h, 
 \end{equation} 
 with conventions explained at the beginning of the section.

\paragraph{Sp($\mathbf{n+1}$)/Sp$(\mathbf {n)=S^{4n+3}}$.}
 For the spheres $S^{4n+3}$ we have this alternative representation as cosets, again with rank-difference 1. Here
$\mathfrak{sp}(n)$ denotes the compact real form of the Lie algebra $C_n$, not the symplectic Lie algebra which is a
non-compact real form. 
 The definition is
 \begin{equation}
   \mathfrak{sp}(n)=\{ X \in \mathbb H^{n\times n}\ |\ X^\dagger +X=0\},
 \end{equation} 
 where $\mathbb H$ are the quaternions, and $\dagger$ denotes the (quaternion) conjugate transpose. The quaternion
 conjugate is explicitly given by
 $$ (a+ib+jc+kd)^\ast = a-ib-jc-kd. $$
 A matrix satisfying $X^\dagger +X=0$ has only imaginary entries on the diagonal and the off-diagonal entries are 
completely determined by the upper-triangular part, so that the real dimension of $\mathfrak{sp}(n)$ is $3n +
2n(n-1)=2n^2+n$. A set of Cartan generators is given by the diagonal matrices $H_a$, $a=1,\dots,n$ with entry $i$ at the
$(a,a)$-th position, and zeros everywhere else. In particular the rank is $n$. The root space decomposition takes the
following form. Besides the Cartan generators we have the basis elements
\begin{equation}
 \begin{aligned}{}
    Q_a &=\text{diag}(0,\dots,0,j,0,\dots, 0),\\
  P_a &= \text{diag}(0,\dots,0,k,0,\dots,0), 
 \end{aligned}
\end{equation} 
 for $a=1,\dots,n$, and
 \begin{equation}\begin{aligned}
  E_{rs} &= \left(\begin{array}{cccc}
       0 & & & \\
        & \ddots &1 & \\
        & -1  & \ddots & \\
       & &  & 0 
   \end{array}\right)
 ,\qquad 
  I_{rs} = \left(\begin{array}{cccc}
       0 & & & \\
        & \ddots &i & \\
        & i  & \ddots & \\
       & &  & 0 
   \end{array}\right) , \\
  J_{rs} &= \left(\begin{array}{cccc}
       0 & & & \\
        & \ddots &j & \\
        & j  & \ddots & \\
       & &  & 0 
   \end{array}\right)
 ,\qquad
  K_{rs} = \left(\begin{array}{cccc}
       0 & & & \\
        & \ddots & k & \\
        & k  & \ddots & \\
       & &  & 0 
   \end{array}\right) ,
 \end{aligned}\end{equation}
 i.e. $(E_{rs})_{ab} = \delta_{ar}\delta_{bs} - \delta_{as}\delta_{br}$ etc., for $1\leq r<s\leq n$. In the following we
will consider the complexification of $\mathfrak{sp}(n)$, and denote the imaginary unit in this space by $i$, whereas
the quaternionic number denoted by $i$ so far will not occur explicitly any more. We have the commutation relations
 \begin{align}
  [H_a, Q_b\mp iP_b ]& = (\pm i) 2\delta_{ab} (Q_b \mp iP_b) ,\nonumber\\
  [H_a, E_{rs} \mp i I_{rs}] &= (\pm i)\big(\delta_{ar}-\delta_{as}\big)(E_{rs}\mp i I_{rs}), \\
  [H_a, J_{rs} \mp i K_{rs}] &= (\pm i)\big(\delta_{ar} + \delta_{as}\big)(J_{rs}\mp iK_{rs}).\nonumber
 \end{align}
 and choose the positive roots to be
 \begin{align}
  \alpha_b(H_a) &=  \sqrt 2i \delta_{ab},\qquad\qquad\quad b=1,\dots,n,\nonumber \\
  \beta_{rs}(H_a) &= \frac i{\sqrt 2}\big(\delta_{ar}+\delta_{as}\big) ,\qquad 1\leq r<s\leq n,\\
 \gamma_{rs}(H_a) &= \frac i{\sqrt 2}\big(\delta_{as}-\delta_{ar}\big),\qquad 1\leq r<s\leq n,\nonumber
 \end{align}
 w.r.t. the properly normalized Cartan generator $H'_j= \frac 1{\sqrt 2}H_j$. 

\subparagraph{The coset.}
  Again we choose the embedding of $\mathfrak{sp}(n)$ into the upper left-corner of $\mathfrak{sp}(n+1)$. 
Then the orthogonal torus is generated by $H_{n+1}$, and the
$2n+1$ positive root vectors of $\mathfrak m$ are 
 \begin{align}
  Q_{n+1} - i P_{n+1},\qquad E_{r,n+1} + i I_{r,n+1} ,\qquad J_{r,n+1} - i K_{r,n+1} , 
 \end{align}
 with corresponding positive roots 
\begin{equation}
   R^+\setminus S^+ = \big\{\alpha_{n+1}, \gamma_{r,n+1}, \beta_{r,n+1}\ |\ r=1,\dots,n\big\}.
\end{equation} 
The action of these roots on the Cartan generators $H_a\ (a=1,\dots,n)$ of $\mathfrak k$ is
 \begin{equation}
   \alpha_{n+1}(H_a)=0,\qquad \beta_{r,n+1}(H_a) = -\frac i{\sqrt 2}\delta_{ar},\qquad \gamma_{r,n+1}(H_a) = \frac i{\sqrt 2}\delta_{ar},
 \end{equation} 
 from which we read off that 
 $$    \sum_{\alpha \in R^+\setminus S^+} \alpha\ \Big|_{\mathfrak {sp}(n)} =0 $$
 is again satisfied. There are $n+1 $ invariant spinors, as $\mathfrak{so}(\mathfrak m)
=\mathfrak{so}(4n+1)$, with standard embedding of $\mathfrak{sp}(n)$ into the upper left $\mathfrak{so}(4n)$. They lead
to $\mathcal N=\frac n2 +1$ SUSY for even $n$, and $\mathcal N=\frac {n+1}2$ SUSY for odd $n$. The three-form is somewhat more complicated now, due to the relations
 \begin{equation}
   \beta _{r,n+1} + \gamma_{r,n+1} = \alpha_{n+1}
 \end{equation} 
 between roots in $R^+\setminus S^+$. We get
 \begin{equation}
  \begin{aligned}{}
   H \, &\sim\   2 e^{\alpha_{n+1}} \wedge e^{-\alpha_{n+1}} \wedge h  \\
  & +\sum_{r=1}^n \Big(e^{\beta_{r,n+1}} \wedge e^{-\beta_{r,n+1}} \wedge h + e^{\gamma_{r,n+1}} \wedge e^{-\gamma_{r,n+1}} \wedge h  \Big)  \\
    & -\sqrt 2i \sum_{r=1}^n \Big( N_{\beta,\gamma} e^{\beta_{r,n+1}} \wedge e^{\gamma_{r,n+1}} \wedge e^{-\alpha_{n+1}}   \\
& \qquad \qquad+ N_{-\beta,-\gamma} e^{-\beta_{r,n+1}} \wedge e^{-\gamma_{r,n+1}} \wedge e^{\alpha_{n+1}}\Big), 
  \end{aligned}
 \end{equation} 
 where the constants $N_{\alpha,\beta}$ are defined by $[E_\alpha,E_\beta] = N_{\alpha,\beta}E_{\alpha+\beta}$ for $E_\alpha$ the root vector corresponding to $\alpha$.

\paragraph{Sp($\mathbf{n}$)/SU($\mathbf n$).}
 We use the notation of the last example. Here the roots $\gamma_{rs}$, corresponding to matrices with entries $1,i\in \mathbb H$, belong to $\mathfrak{su}(n)$, whereas 
the other ones with entries $j$ and $k$ belong to $\mathfrak m$. The Cartan basis of $\mathfrak{su}(n)$ is spanned by the elements $\lambda_i = H_{i+1}-H_i$ for $i=1,\dots,n-1$, and there is an orthogonal torus generated by $\sum_i H_i$.
Adding up the roots of $\mathfrak m$, we get
 \begin{equation}
  \sum_{a=1}^n \alpha_a + \sum_{1\leq a<b\leq n} \beta_{ab} = i\frac {n+1}{\sqrt 2}\big(1,\dots,1\big).
 \end{equation} 
 in the basis $H_1,\dots, H_n$, and this vanishes on the Cartan algebra of $\mathfrak {su}(n)$, so that our condition is satisfied. The amount of
supersymmetry could be larger than $\mathcal N=1$; we have $\dim \mathfrak m= n^2+n+1$, and $\mathcal N$ depends
on the number of invariant spinors for the corresponding embedding $\mathfrak{su}(n)\subset
\mathfrak{spin}(n^2+n)$. There are no relations involving only three roots of $\mathfrak m$, so that the three-form becomes
 \begin{equation}
   H\sim \sum_{r<s} e^{\beta_{rs}}\wedge e^{-\beta_{rs}}\wedge h +  \sum_{r} e^{\alpha_{r}}\wedge e^{-\alpha_{r}}\wedge h .
 \end{equation} 

\paragraph{SO($\mathbf{2n}$)/SU($\mathbf n$).}
 Again the rank difference one is one. We define $\mathfrak {su}(n)$ as the subalgebra
of $\mathfrak{so}(2n)$ leaving the subspace $S^0\subset S$ in the spin representation of
$\mathfrak{so}(2n)$ invariant. Upon introduction of the canonical complex structure on $\mathbb R^{2n}$ the subalgebra
 \begin{equation}
   \mathfrak{su}(n) = \{ X\in \mathfrak{so}(2n)\ |\ dS(X)\cdot S^0=0\}
 \end{equation} 
 is determined by the equations
 \begin{equation}\label{Ex:Spin/SU:defEq1}
   X_{12} +X_{34} +\dots +X_{2n-1,2n}=0
 \end{equation} 
 (where $X_{ab}=\delta_{ac}{X^c}_b$) and 
  \begin{align}\label{Ex:Spin/SU:defEq2}
    X_{2a-1,2b-1} = X_{2a,2b},\qquad X_{2a-1,2b}= - X_{2a,2b-1},
  \end{align}
 for $1\leq a <b\leq n$. Denote by $E_{rs}$ ($1\leq r<s \leq 2n$) the standard basis of $\mathfrak{so}(2n)$, with
$(E_{rs})_{ab}= \delta_{ra}\delta_{sb} -\delta_{sa}\delta_{rb}$. 
 A Cartan basis of $\mathfrak{so}(2n)$ is given by $\lambda_a = E_{2a-1,2a}$ for $a=1,\dots,n$, or
 \begin{equation}\label{Ex:Spin/SU:Cartan1}
\lambda_1=\left(\begin{smallmatrix}
 	     0 && 1 && &  &&\\
             -1&&  0 &&  & &&\\
             &   &&& && &\\ 
             & && &&&  &
   \end{smallmatrix}\right), \quad \dots,\quad
 \lambda_n=\left(\begin{smallmatrix}
             &&  &&  && &\\
             &&  &&  && &\\
             &&   && && &\\ 
             & &&& &   0& &   1 \\
             & &&& &  -1& & 0
   \end{smallmatrix}\right).
 \end{equation} 
 A basis of $\mathfrak{su}(n)$ is given by the Cartan generators
 \begin{equation}
   H_a = \lambda_{2a-1} - \lambda_{2a},\qquad a=1,\dots, n-1,
 \end{equation} 
 and combinations of the type $(1\leq a<b\leq n)$
 \begin{equation}
   A^\pm_{ab}=E_{2a-1,2b-1} \pm i E_{2a,2b-1} \mp i(E_{2a-1,2b} \pm i E_{2a,2b}),
 \end{equation} 
 where all signs are correlated. They satsify
  \begin{equation}
   [\lambda_c, A_{ab}^\pm] = \pm i (\delta_{ac} -\delta_{bc})A^\pm_{ab}.
  \end{equation} 
 The Cartan basis $H_1,\dots,H_{n-1}$ can be extended to one of $\mathfrak{so}(2n)$ by adding $H_n= \sqrt{\frac
2n}\sum_{a=1}^n \lambda_a$, which gives rise to an orthogonal torus again.
 A basis of $\mathfrak m$ is given by $H_n$ plus
 \begin{equation}
  B^\pm_{ab} = E_{2a-1,2b-1} \pm i E_{2a,2b-1} \pm i(E_{2a-1,2b} \pm i E_{2a,2b}).
 \end{equation} 
 They have the commutation relations
 \begin{equation}
     [\lambda_c, B_{ab}^\pm] = \pm i (\delta_{ac} +\delta_{bc})B^\pm_{ab},
 \end{equation} 
 so that the roots $\beta_{ab}\in R^+\setminus S^+$, corresponding to $B^+_{ab}$ in the
$(\lambda_1,\dots,\lambda_n$)-basis are
\begin{equation}
  \beta_{ab} = i\big(0,\dots,0,\underset{(a)}1,0,\dots,0,\underset{(b)}1,0,\dots,0\big).
\end{equation}
 In the basis $H_1,\dots,H_{n-1}$ they satisfy the following rules.
\renewcommand{\arraystretch}{1.2}
\begin{table}[H]
 \begin{center}
  \begin{tabular}{|c|c|c|c|} \hline
    $a,b\notin \{2c-1,2c\}$, or & $a=2c-1$, $b=2c$ & $\beta_{ab}(H_c)=0$ &  \\ \hline
    $b=2c-1$, or & $a=2c-1$, $b\neq 2c$ & $ \beta_{ab}(H_c)=+ i$ & $n-2$ times\\ \hline
    $a=2c$, or &$ a\neq 2c-1,\ b=2c$ & $ \beta_{ab}(H_c)=- i$ & $n-2$ times \\ \hline
   \end{tabular}
   \caption{Action of the roots $\beta_{ab}$ ($a<b)$ of $\mathfrak m$ on the Cartan generators of
$\mathfrak{su}(n)$. The number in the last column tells how often the corresponding case occurs, being irrelevant in the
first case.}
  \end{center}
 \end{table}
\renewcommand{\arraystretch}{1}
 We conclude that 
 $$    \sum_{\beta \in R^+\setminus S^+} \beta\ \Big|_{\mathfrak {su}(n)} =0. $$
 Here $\dim \mathfrak m=n^2-n +1$, and again we might have $\mathcal N>1$ SUSY, with $\mathcal N$ equal to the number
of invariant spinors for $\mathfrak{su}(n)\subset \mathfrak{spin}(n^2-n)$. But for $n>3$ this is certainly not relevant
for heterotic string theory, and for $n\leq 3$ we have the exceptional isomorphisms
$\mathfrak{so}(6)=\mathfrak{su}(4)$, $\mathfrak{so}(4) = \mathfrak{su}(2)\oplus\mathfrak{su}(2)$, so that we do not get any essentially new models. The three-form is simply
 \begin{equation}
  H\sim \sum_{a<b} e^{\beta_{ab}} \wedge e^{-\beta_{ab}} \wedge h. 
 \end{equation} 

\paragraph{SO($\mathbf{n+1)}$/SO($\mathbf{n-1}$).} 
 We use the notation of the previous example.
 Contrary to $\mathfrak{so}(n)/\mathfrak{so}(n-1)$ this space admits invariant spinors and 
 the common root space decomposition. Suppose first that $n$ is odd, so that we consider $\mathfrak{so}(2k)/\mathfrak{so}(2k-2)$. Denote the common Cartan generators by $\lambda_1,\dots,\lambda_{k-1}$, and the remaining one in $\mathfrak m$ by $\lambda_k$. The positive roots of $\mathfrak{so}(2k)$ in the $\lambda$-basis are
 \begin{equation}
  \alpha_{ab} = i(\dots,\underset{(a)} {-1},\dots,\underset{(b)}{1},\dots),\qquad   \beta_{ab} = i(\dots,\underset{(a)} 1,\dots,\underset{(b)}{1},\dots),
 \end{equation}  
 for $1\leq a<b \leq k$, and the ones belonging to $\mathfrak m$ are
 \begin{equation}
  R^+\setminus S^+ = \{ \alpha_{ak},\beta_{ak}\ |\ a=1,\dots,k-1\}.
 \end{equation} 
 From 
 \begin{equation}
   \beta_{ak}(\lambda_j) = -\alpha_{ak}(\lambda_j) =i,\qquad \forall j<k,
 \end{equation} 
 it follows immediately that the condition $\sum_{\alpha\in R^+\setminus S^+ }\alpha \big|_{\mathfrak {so}(2k-2)}=0$ is 
satisfied. What is the amount of supersymmetry preserved? We have that $\mathfrak m$ is twice the fundamental
representation of $\mathfrak{so}(2k-2)$, plus the trivial from the additional Cartan generator. The
complex spin representation has the nice feature that for two even-dimensional representations $V,W$ we have
  \begin{equation}
    S(V\oplus W) = S(V)\otimes S(W),
  \end{equation} 
 so that, ignoring the additional Cartan generator which we have taken into accout already, $S(\mathfrak m)$ is simply
the tensor product of the spin representation with itself. This contains the trivial representation only twice
\cite{Var04}, upon introduction of a $\mathfrak{spin}(2n)$-invariant inner product and an orthonormal basis $\{\psi_i\}$
of $S$ the invariant elements are given by 
 \begin{equation}
   \sum_i \psi_i \otimes \psi_i,\qquad \text{and}\qquad \sum_i \psi_i \otimes \Gamma\psi_i,
 \end{equation}
 with $\Gamma$ the chirality element.
 From the embedding $\mathfrak{so}(2k-2)\subset \mathfrak{su}(\mathfrak m)$ we knew there would be at least 2 spinors
invariant under the complexified Lie algebra, and as there are no further ones we get indeed $\mathcal N=1$ SUSY. 
 The situation with $\mathfrak{so}(2k+1)/\mathfrak{so}(2k-1)$ is very similar, the only additional positive roots appearing
are the
 \begin{equation}
   \gamma_a = (0,\dots,0,\underset{(a)} {+i},0,\dots,0 ),\qquad 1\leq a\leq k,
 \end{equation} 
 with $\gamma_k$ the only new root in $R^+\setminus S^+$. But this one vanishes on all of the $\lambda_j$ for $j<k$, so
we have the condition satisfied here as well. The dimension of SO($n+1$)/SO($n-1)$ is $2n-1$, together with the
additional $\mathbb R$ factor supporting the linear dilaton this gives dimension $2n$, and the only essentially new
model with dimension low-enough to be of interest for heterotic string theory is the seven-dimensional SO(5)/SO(3).
Below we will consider another embedding of SO(3) into SO(5) which also preserves $\mathcal N=1$. Here $H$ is
 \begin{equation}
  H\sim \sum_{a=1}^{k-1} \Big(  e^{\alpha_{ak}}\wedge e^{-\alpha_{ak}}\wedge h +  e^{\beta_{ak}}\wedge e^{-\beta_{ak}}\wedge h \Big) 
 \end{equation} 
 for $\mathfrak{so}(2k)/\mathfrak{so}(2k-2)$, whereas on $\mathfrak{so}(2k+1)/\mathfrak{so}(2k-1)$ it has additionally the term $e^{\gamma_{k}}\wedge e^{-\gamma_{k}}\wedge h$.

\paragraph{Spin(7)/$\mathbf{G_2=S^7}$.}
 For this example we get a solution as well, but our construction is not applicable, 
the concrete calculations below mainly serve as a preparation for the next example,
$G_2/$SU(3). Here we can use the following argument instead. 
We have dim$(\mathfrak m)=7$, and $\mathfrak m$ is not a sum of trivial representations of $\mathfrak g_2$. But then
it has to be the irreducible 7-dimensional one, as this is the lowest possible dimension for irreducible $\mathfrak g_2$
representations.
 The spin representation of $\mathfrak g_2$ over $\mathfrak m$ then comes from the standard embedding into
$\mathfrak{spin}(7)$, which will be given below, and it is well-known that this leaves invariant exactly one
Majorana spinor $\psi$. On the other hand $\gamma(H)$ commutes with $\mathfrak g_2$ acting on spinors, so that $\psi$
must be an eigenspinor:
 \begin{equation}
  \gamma(H)\psi =i \lambda \psi,\qquad \lambda \in \mathbb R.
 \end{equation} 
  For a 
properly normalized linear function $\phi$ on $\mathbb R$, we then get $\gamma(d\phi)\sim$ id on the 1-dimensional
spinor space over $\mathbb R$, and from a suitable non-zero 1D spinor $\kappa$ we can construct an 8-dimensional spinor
$\kappa\otimes \psi$ such that
 \begin{equation}
  \gamma\Big(d\phi - \frac 1{12} H\Big) \kappa\otimes \psi=0.
 \end{equation}  
 This solution preserves $\mathcal N=1$ SUSY in the orthogonal 2-dimensional space, and the same argument applies to
other 
seven-dimensional spaces with nearly parallel $G_2$-structure and one Killing spinor. We will work out one example in
more detail below, for the space SO(5)/SO(3) with a non-standard embedding.  Other examples of this type are given by
the Aloff-Wallach spaces $N(k,l)$.

\bigskip

 In the remainder of the paragraph we work out the embedding of $\mathfrak g_2$ into $\mathfrak {spin}(7)$, and its
root space decomposition.
We define $\mathfrak g_2$ as the subalgebra of $\mathfrak{spin}(7)$ fixing a given Majorana spinor
\cite{FrKMS97}. This condition amounts to seven equations, which we write in terms of the antisymmetric
$X_{ab}:=\delta_{ac}{X^c}_b$, $X\in \mathfrak{so}(7)$:
 \begin{align}\label{g_2DefEqns}
  X_{17} &= - X_{36} - X_{45},\qquad X_{27} =X_{46}- X_{35}, \nonumber\\
  X_{37} &=  X_{25} + X_{16},\qquad \quad X_{47} =X_{15}- X_{26} ,\nonumber\\
  X_{57} &= - X_{14} - X_{23},\qquad X_{67} =X_{24}- X_{13} ,\\
  &\qquad X_{12} + X_{34} + X_{56} =0. \nonumber
 \end{align}
 Denote by $E_{ij}$ the 7$\times 7$-matrix with $(E_{ij})_{ab} = \delta_{ai}\delta_{bj} - \delta_{bi}\delta_{aj}$. A
Cartan basis for $\mathfrak {so}(7)$ is given by
 \begin{align}\label{so(7)Cartanbasis}
     \lambda_1 &= E_{12} = \left(\begin{array}{ccc}
         0 & 1 & \\
        -1 & 0 & \\
           & & 0_{5\times 5}
   \end{array}\right), \nonumber\\
     \lambda_2 &= E_{34} = \left(\begin{array}{cccc}
         0_{2\times 2} &  & & \\
          & 0 & 1 & \\
           & -1& 0 & \\
          & & & 0_{3\times 3}
  \end{array}\right), \\
     \lambda_3 &= E_{56} = \left(\begin{array}{cccc}
        0_{4\times 4} & & & \\
         & 0 & 1 &  \\
        & -1 & 0 &  \\
        & & & 0
   \end{array}\right). \nonumber
 \end{align}
 and we have 
 $$[\lambda_1 , E_{1j} \pm i E_{2_j} ] = \pm i (E_{1j} \pm iE_{2j}),\qquad j>2,$$
 etc. We can choose a basis of $\mathfrak g_2$ with Cartan generators 
 \begin{equation}
  \sqrt 6 H_1=  2\lambda_1 -\lambda_2-\lambda_3,\qquad\sqrt 2 H_2 =\lambda_2 -\lambda_3,
 \end{equation} 
 and the other basis elements
 \begin{align}
  E_{36}-E_{45} ,&\qquad 2E_{17} - E_{36}-E_{45}, \\
  E_{46}+E_{35}, &\qquad 2E_{27} +E_{46}-E_{35}, \nonumber
 \end{align}
 etc. (cf \eqref{g_2DefEqns}; the elements listed here are deduced from the first line in \eqref{g_2DefEqns}, the second
and third line give rise to another 8 analogous basis elements, so that we end up with 14 generators). Now we introduce
the following complex linear combinations of these:
\begin{equation} 
\begin{aligned}{}
   A_1& = E_{36}-E_{45} + i(E_{46}+E_{35}) \\
   A_2 &= E_{25}-E_{16}-i(E_{15}+E_{ 26}) \\
   A_3 &= E_{14}-E_{23} -i(E_{24}+E_{13}) \\
  B_1 &= 2(E_{17} + iE_{27}) -E_{36}-E_{45} + i(E_{46}-E_{35}) \\
  B_2 &= 2(E_{37}+iE_{47}) + E_{25}+E_{16} +i(E_{15}-E_{26}) \\
 B_3 &= 2(E_{57}-iE_{67}) -E_{14}-E_{23}+i(E_{13}-E_{24}) .
 \end{aligned}
\end{equation}
 The positive roots w.r.t. $H_1,H_2$ are $i/\sqrt 2$ times:
 \begin{align}
   \alpha_1=(0,2),\qquad\quad\alpha_2= (\sqrt 3,1)&,\qquad\quad\alpha_3= (-\sqrt 3,1), \\
   \beta_1=(2/\sqrt 3,0),\qquad \beta_2=(-1/\sqrt 3,&1),\qquad \beta_3=(1/\sqrt 3,1),\nonumber
 \end{align}
 where the first line has the $A$-roots, and the second one the $B$-roots. The ordering of the roots is
 $$ \alpha_1> \alpha_2 > \beta_3>\beta_2 >\alpha_3 >\beta_1.$$
 In the orthogonal complement of $\mathfrak g_2$ in $\mathfrak{so}(7)$ let us introduce the third Cartan generator
 \begin{equation}
  H_3 = \sqrt {\frac 13} (\lambda_1+\lambda_2+\lambda_3) . 
 \end{equation} 
 Now it turns out that $[H_3,A]=0$, but $[H_3,B]\neq 0$. Thus the root space decomposition of $\mathfrak g_2$ cannot be
lifted to one of $\mathfrak{spin}(7)$, and our construction does not apply to this coset, but the concrete
realization of $\mathfrak g_2$ will still be useful for the following example.

\paragraph{$\mathbf{G_2/}$SU(3) $\mathbf{=S^6}$.}

 If we put $X_{a7}=0$ in \eqref{g_2DefEqns}, then we obtain the defining equations of $\mathfrak{su}(3)$ inside
$\mathfrak{so}(6)$, and thereby realize $\mathfrak{su}(3)$ as a subalgebra of $\mathfrak g_2$. A basis of $\mathfrak
{su}(3)$ is given by the Cartan generators $H_1,H_2$, and $A_1,A_2,A_3$ plus their complex conjugates. Therefore the
roots $\alpha_1,\alpha_2,\alpha_3$ are in $S^+$, whereas $\beta_{1,2,3,}$ are in $R^+\setminus S^+$. The roots of
$\mathfrak g_2$ satisfy the following relations:
 \begin{align}\label{g_2rootrelations}
  \alpha_1 = \alpha_2 + \alpha_3,\qquad \alpha_1&=\beta_2+\beta_3 ,\qquad \alpha_2 = \beta_1+\beta_3 ,\\
  \beta_2=\alpha_3+\beta_1 ,&\qquad \beta_3=\beta_1+\beta_2,\nonumber
 \end{align}
 contradicting our assumption that roots in $R^+\setminus S^+$ are higher than those of $S^+$. In this
case this is not a matter of bad choice of positive roots, it is simply not
possible to satisfy the condition. However, if we now choose the complex structure on $\mathfrak m$ as follows, by
giving its $+i$ eigenspace ((1,0)-vectors) and $-i$ eigenspace ((0,1)-vectors):
 \begin{equation}
   +i:\quad B_1,B_2,\overline B_3, \qquad -i: \quad \overline B_1,\overline B_2,B_3,
 \end{equation} 
 then \eqref{g_2rootrelations} shows that we have again $\mathfrak k=\mathfrak{su}(3)\subset \mathfrak u(\mathfrak m)$.
 $J$ is also compatible with the metric, but leads to a three-form $H\in \Omega^{(3,0)}\oplus
\Omega^{(0,3)}$, instead of the usual $H\in \Omega^{(2,1)}\oplus \Omega^{(1,2)}$. Now it is not the positive roots in
$R\setminus S$ corresponding to creation and annihilation operators, but the roots $\beta_1,\beta_2$ and $-\beta_3$.
Therefore also the spin representation of the Cartan generators $H_1,H_2\in \mathfrak {su}(3)$ changes slightly (the
$\zeta^3$ part changes sign) and is now given by
 \begin{equation}
  \begin{aligned}{}
   \widetilde{\text{ad}}(H_1) &= \frac i{2\sqrt 6} \Big(2(\overline \zeta^1 \zeta^1 -\zeta^1 \overline \zeta^1 )- \overline \zeta^2 \zeta^2
+\zeta^2 \overline \zeta^2 -\overline \zeta^3 \zeta^3 +\zeta^3 \overline \zeta^3 \Big),\\
  \widetilde{\text{ad}}(H_2)&= \frac i{2\sqrt 2} \Big(\overline \zeta^2 \zeta^2 -\zeta^2 \overline \zeta^2 - \overline \zeta^3 \zeta^3
+\zeta^3 \overline \zeta^3\Big),
  \end{aligned}
 \end{equation} 
  This leaves invariant $S^0\oplus S^m$, being equivalent to $\beta_1+\beta_2 -\beta_3=0$.
 The three form is (independently of the complex structure) 
 \begin{equation}
  H =-8i( e^{12\overline 3} - e^{\overline{12}3}) \ \in \ \Omega^{(3,0)} \oplus \Omega^{(0,3)}.
 \end{equation} 
 It acts on the spinors through
 \begin{equation}
  \gamma(H) = -24 i \big(\zeta^1\zeta^2\zeta^3 - \overline \zeta^1\overline \zeta^2\overline \zeta^3\big),
 \end{equation} 
 and in particular on $\mathcal S^0\oplus S^3$ as
 \begin{equation}
   \gamma(H)\cdot 1 = -24 i w^{123} ,\qquad \gamma(H) w^{123} = -24i\cdot 1.
 \end{equation} 
 Here we do not have additional flat directions at our disposal to define the linear dilaton on, as the rank of
$\mathfrak k$ is equal to rk$(\mathfrak g_2)$. Now we could try to introduce one by hand and consider $\mathbb R \oplus  \mathfrak g_2/\mathfrak{su}(3)$ with $\phi$ linear on $\mathbb R$. It turns out that this does not lead to any solutions, which is probably due to the fact that the tensor product of the spinor spaces in dimension three and seven does not give the ten-dimensional spinor space, which in turn follows from dimensional reasons. Let us then try adding $\mathbb R^2$. In this case we can put $\phi(x_1,x_2) = 2x_2$, introduce a fourth complex spinor variable $w^4\simeq \frac 12(dx^1 +idx^2)$, and obtain $\gamma(d\phi) = 2i(\overline \zeta ^4-\zeta^4)$. Of the 4 invariant Majorana-Weyl spinors
 \begin{equation}
  \begin{aligned}{}
    \eta & = 1+w^{1234},\qquad\quad \mu = w^4 - w^{123}  \\
    \kappa &= i(1-w^{1234}) ,\qquad \lambda =i(w^4 +w^{123}) 
  \end{aligned}
 \end{equation} 
 the first two are annihilated by $\gamma(d\phi-\frac 1{12}H)$, the other two are not. Thus we get two solutions on $\mathbb R^2 \times G_2/$SU(3). As the dilaton depends only on one of the two additional directions, we should view the other $\mathbb R$ as belonging to the orthogonal space $\mathbb R^{2,1}$, with $\mathcal N=1$ supersymmetry preserved, noting that a three-dimensional Majorana spinor can be combined with $\eta$ and $\mu$ in a unique way into a 10-dimensional Majorana-Weyl spinor. 

\bigskip 

 The almost complex structure on $G_2/$SU(3) we defined here makes it
into a so-called nearly K\"ahler manifold. In 6 dimensions only 4 manifolds of this type are known, and they are all
cosets \cite{Butruille}:
  \begin{align*}
    SU(3)/U(1)\times U(1),\qquad & \quad Sp(2)/Sp(1)\times U(1),\\
   G_2/SU(3)=S^6,\quad\qquad & SU(2)^3/SU(2)_{\text{diag}}=S^3\times S^3.
  \end{align*}
 The same construction we presented here for $G_2/$SU(3) can be applied to any of them, but for
SU(2)$^3$/SU(2)$_{\text{diag}}$ it has to be modified
slightly, with the two Cartan generators of $\mathfrak m$ combined into one complex vector,
effectively treating SU(2)$^3$ as a rank 1 group. Except for G$_2$/SU(3) the nearly K\"ahler cosets have
higher-dimensional generalizations, and one may wonder whether the construction applies to these.\\

From the well-known fact that Sp($n)\times $U(1) as a subset of Spin($4n$) has no invariant spinors (which is equivalent
to the fact that quaternion-K\"ahler manifolds have no parallel spinors), it can be deduced that the equal rank coset
Sp$(n)/$Sp$(n-1)\times $U(1) has no invariant spinors for $n>2$. 
 Let us then consider SU(2)$^n$/SU(2)$_\text{diag}$, where we take $n$ odd for simplicity. $\mathfrak m$ consists of
$n-1$ copies of the adjoint representation, with weights $n-1$ times (2),(0),(-2), in Dynkin label notation.
 From this one can determine the weights of the spin representation, and deduce that the trivial representation occurs
with multiplicity
 \begin{equation}
   \frac 12(n-1)\bigg[ \binom{n-1}{\frac {n-1}2 }-\binom{n-1}{\frac {n-3}2 }\bigg] =  \frac
{n-1}{2n}\binom{n}{\frac{n-1}2}.
 \end{equation} 
 This case thus generalizes, but with a lot of invariant spinors. We leave it open whether or not the dilatino equation
can be solved in general, but see the discussion at the end of the following paragraph.

\paragraph{SU($\mathbf{n+1}$)/U(1)$^{\mathbf n}$} is quite interesting in itself, as the problem of finding invariant
spinors becomes a simple combinatorial problem on the roots, so we treat it in detail. Furthermore, for odd
$n$ there are no invariant spinors, contrary to all the other non-symmetric examples considered so far. The subalgebra
is a Cartan algebra here, and the
decomposition of $\mathfrak m\otimes \mathbb C$ into
irreducibles is given by the root space decomposition. The positive roots are
 \begin{equation}
   \alpha_{ab} = i(0,\dots,0,\underset{(a-1)}{-1},\underset{(a)}{+1},0,\dots,0,\underset{(b-1)}{+1},\underset
{(b)}{-1},0,\dots,0),
 \end{equation} 
 with $1\leq a < b \leq n+1$, and terms that run out of the $n$ entries of the root vector are dropped. They satisfy
the relations
 \begin{equation}\label{SUnRels}
   \alpha_{ab} + \alpha_{bc} = \alpha_{ac}.
 \end{equation} 
 Similarly, the weights of the spin representation determine the irreducible representations in $S(\mathfrak m)$, and
there are invariant spinors if and only if the zero weight occurs. As usual, the weights of $S(\mathfrak m)$ are of the
form $\frac 12\sum_{\alpha \in R^+} \pm \alpha$, with arbitrary combinations of signs. 
 \begin{prop}
   For even $n$ invariant spinors exist, but not for odd $n$.
 \end{prop}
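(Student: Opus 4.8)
The plan is to reduce the existence of an invariant spinor to a combinatorial statement about orientations of the complete graph $K_{n+1}$, and then handle the two parities separately. Since $\mathfrak k=\mathfrak h$ is abelian, a $K$-invariant element of $S(\mathfrak m)$ is the same thing as an element of the zero weight space of the $\mathfrak h$-action; by the description of the weights of $S(\mathfrak m)$ recalled above, such an element exists if and only if one can pick signs $\varepsilon_{ab}\in\{+1,-1\}$, one for each positive root $\alpha_{ab}$ with $1\le a<b\le n+1$, so that $\sum_{a<b}\varepsilon_{ab}\,\alpha_{ab}=0$. Writing $\alpha_{ab}=e_a-e_b$ in the standard model $\mathfrak h^\ast=\{x\in\mathbb R^{n+1}:\sum_c x_c=0\}$, which is compatible with the relations \eqref{SUnRels}, a choice of signs is precisely an orientation of $K_{n+1}$: orient the edge $\{a,b\}$ as $a\to b$ if $\varepsilon_{ab}=+1$ and as $b\to a$ otherwise. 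Then $\sum_{a<b}\varepsilon_{ab}\,\alpha_{ab}=\sum_{c}\bigl(\mathrm{outdeg}(c)-\mathrm{indeg}(c)\bigr)e_c$.

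First I would treat the ``only if'' part. The vector $\sum_c\bigl(\mathrm{outdeg}(c)-\mathrm{indeg}(c)\bigr)e_c$ vanishes in $\mathfrak h^\ast$ exactly when the orientation is balanced, i.e.\ $\mathrm{outdeg}(c)=\mathrm{indeg}(c)$ for every vertex $c$. Since every vertex of $K_{n+1}$ has degree $n$, a balanced orientation forces $n$ to be even. Hence for odd $n$ the zero weight does not occur in $\Omega(S(\mathfrak m))$ and there is no invariant spinor, which also explains why $\mathrm{SU}(n+1)/\mathrm U(1)^n$ is exceptional among the cosets considered.

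For even $n$ I would exhibit a balanced orientation by hand. Label the vertices by $\mathbb Z_{n+1}$ and orient $i\to j$ whenever the residue of $j-i$ modulo $n+1$ lies in $\{1,2,\dots,n/2\}$. Because $n+1$ is odd, the sets $\{1,\dots,n/2\}$ and $\{n/2+1,\dots,n\}$ partition the nonzero residues modulo $n+1$, and $n/2+1,\dots,n$ are exactly the negatives of $1,\dots,n/2$; hence for every edge exactly one of its two orientations is selected, and each vertex acquires out-degree $n/2$ and in-degree $n/2$. The associated sign choice then satisfies $\sum_{a<b}\varepsilon_{ab}\,\alpha_{ab}=0$, so the zero weight occurs in $\Omega(S(\mathfrak m))$ and an invariant spinor exists.

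I do not expect any real obstacle: the only slightly delicate points are the identification of $K$-invariant spinors with the zero weight space, which is automatic because $\mathfrak k$ is a Cartan subalgebra, and the observation that the constraint $\sum_c x_c=0$ is harmless since the numbers $\mathrm{outdeg}(c)-\mathrm{indeg}(c)$ automatically sum to zero. The one genuinely constructive ingredient is the circulant (``rotational'') tournament used in the even case, whose regularity is immediate from its definition.
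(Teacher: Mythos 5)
Your proof is correct, and it rests on the same reduction as the paper's: an invariant spinor exists if and only if one can choose signs so that $\sum_{a<b}\pm\alpha_{ab}=0$, since $\mathfrak k=\mathfrak h$ is abelian and invariance is equivalent to lying in the zero weight space. Where you differ is in how you settle that combinatorial question. The paper partitions the positive roots into blocks of three or four entries of the root triangle, each block carrying a relation of the form $\alpha_{ab}+\alpha_{bc}-\alpha_{ac}=0$ or $\alpha_{ab}-\alpha_{cb}-\alpha_{ad}+\alpha_{cd}=0$, and reads the signs off from these relations; for odd $n$ it observes that each index then occurs with odd multiplicity, so no cancellation is possible. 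You instead recast a sign choice as an orientation of $K_{n+1}$ and observe that the signed sum vanishes precisely for a balanced (Eulerian) orientation; the odd case is then the statement that a vertex of odd degree cannot have equal in- and out-degree (the same parity argument in different clothing), and the even case is handled by the standard circulant regular tournament on $\mathbb Z_{n+1}$. Your packaging is arguably cleaner and more uniform --- the even-$n$ construction is a single explicit formula rather than a case-by-case block decomposition, and the graph language makes the obstruction for odd $n$ transparent --- while the paper's block relations tie the construction more directly to the root-system identities \eqref{SUnRels}. Both arguments are complete; neither attempts to count the invariant spinors, which the paper also leaves open.
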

\begin{proof} $n$ even:
  We have to show that there exists a certain combination of signs such that $\sum_{ab} \pm \alpha_{ab}=0$. Group the
roots into pairs of 3 or 4 elements according to the following triangle:
 
  \renewcommand{\arraystretch}{1.4}
   \begin{table}[h]\centering
  \begin{tabular}{cccccc}       
   &&$\vdots $&&&  \\ \hline 
  \multicolumn{1}{|c}{$\alpha_{17}$} &    \multicolumn{1}{c|}{$\alpha_{27}$} &   \multicolumn{1}{|c}{$\alpha_{37}$}  &  
\multicolumn{1}{c|}{$\alpha_{47}$}  &   \multicolumn{1}{|c}{$\alpha_{57}$}  &   \multicolumn{1}{c|}{$\alpha_{67}$} \\
  \multicolumn{1}{|c}{$\alpha_{16}$} &    \multicolumn{1}{c|}{$\alpha_{26}$} &   \multicolumn{1}{|c}{$\alpha_{36}$}  &  
\multicolumn{1}{c|}{$\alpha_{46}$}  &   \multicolumn{1}{|c}{$\alpha_{56}$}  &  \multicolumn{1}{c|}{}  \\ \hline
   \multicolumn{1}{|c}{$\alpha_{15}$} &    \multicolumn{1}{c|}{$\alpha_{25}$} &   \multicolumn{1}{|c}{$\alpha_{35}$}  &
\multicolumn{1}{c|}{$\alpha_{45}$}  &    &   \\ 
  \multicolumn{1}{|c}{$\alpha_{14}$} &    \multicolumn{1}{c|}{$\alpha_{24}$} &   \multicolumn{1}{|c}{$\alpha_{34}$}  & 
\multicolumn{1}{c|}{} &     &   \\ \cline{1-4}
  \multicolumn{1}{|c}{$\alpha_{13}$} &    \multicolumn{1}{c|}{$\alpha_{23}$} &   & &     &   \\ 
  \multicolumn{1}{|c}{$\alpha_{12}$} &     \multicolumn{1}{c|}{} &    & &     &   \\ \cline{1-2}
    \end{tabular}
 \caption{Grouping the roots of $\mathfrak{su}(n+1)$ into blocks of 3 or 4 elements}
 \end{table}
\renewcommand{\arraystretch}{1} 
 Inside each box there is a relation of the type 
 \begin{equation}
  \alpha_{12} + \alpha_{23} -\alpha_{13} =0,\qquad \alpha_{14}-\alpha_{24} -\alpha_{15} + \alpha_{25} =0,\quad \dots.
 \end{equation} 
 Fix the signs of the roots $\alpha_{ab}$ accordingly and denote them by sign$(a,b)$, then we certainly have 
 $$ \sum_{1\leq a<b\leq n+1} \text{sign}(a,b) \alpha_{a,b} =0,$$
 and the corresponding spinor is invariant. On the other hand this shows clearly that there will be many more invariant
spinors for large $n$, but we will not try to determine their number. 
\bigskip
 
 $n$ odd: The reason that we could sum all the roots to zero by an appropriate choice of signs in the even $n$ case was
that in every box the indices that appeared, appeared exactly twice, as in the relations \eqref{SUnRels}. Adding
another row to the diagram of the form
 \begin{equation}
   \alpha_{1,n+1} \qquad \alpha_{2 ,n+1} \qquad \dots\qquad \alpha_{n,n+1}
 \end{equation} 
 for odd $n$ we get every index appearing with odd multiplicity. It follows that the roots cannot add up to zero anymore.
\end{proof}

 In the case of even $n>2$ there are several invariant Majorana spinors and $\gamma(H)$ acts non-trivially
on them, so that it is far from obvious whether this action can be compensated for by a linear dilaton on an
extra $\mathbb R$ factor, and for a well-chosen spinor $\varepsilon$. Analyzing our explicit construction of the
spinors, this seems rather unlikely, although a proof that it is not possible probably requires more elaborate
techniques. The
models with $n\geq 3$ are not relevant for heterotic supergravity anyway, but they would very much break the pattern of 
the other solutions: they are of even dimension greater than six, so that their cone cannot support parallel spinors
\cite{Wang89}, contrary to all the solutions we found so far (cf. the discussion in the conclusion). We therefore
conjecture that it is not possible to solve the dilatino equation on them. The same applies to SU(2)$^n/$SU(2) for odd
$n$, and every other naturally-reductive coset of even dimension different from six. 

\paragraph{SO(5)/SO(3).}
 As an example where the common root space decomposition does not exist, and the subalgebra is not in $\mathfrak
u(\mathfrak m)$, we consider the following embedding of $\mathfrak{so}(3)$ into $\mathfrak{so}(5)$, whose coset manifold
is known to possess a nearly parallel $G_2$-structure \cite{Bryant87, FrKMS97}:
\begin{equation}
 \left(\begin{array}{ccc}
                                                     0& \alpha &\beta \\ 
                                                     -\alpha & 0& \gamma \\
                                                      -\beta & -\gamma &0 \\
                                                   \end{array}\right)
 \mapsto  \left(\begin{array}{ccccc}
                 0& \alpha &-\gamma &\beta & \sqrt 3 \gamma \\ 
              -\alpha & 0& -\beta &-\gamma &-\sqrt 3\beta\\
                \gamma & \beta & 0 & 2\alpha & 0 \\
	        -\beta & \gamma & -2\alpha & 0&0 \\
	        -\sqrt 3 \gamma & \sqrt 3\beta & 0 &0&0 \\
            \end{array}\right)
\end{equation} 
 In terms of our basis vectors 
\begin{equation}
 \begin{aligned}{}
 \lambda_1 &= E_{12},\qquad \lambda_2 = E_{34}, \\
  A &= E_{13} + E_{24} +i(E_{23}-E_{14} ),\\
  B &=  E_{13} - E_{24} +i(E_{23}+E_{14} ), \\
  C_1 &= E_{15} +iE_{25}, \\
  C_2 &= E_{35} + i E_{45}
 \end{aligned}
\end{equation} 
 of $\mathfrak {so}(5)$ defined above, the subalgebra and its complement are spanned by
 \begin{equation}
  \begin{aligned}{}
    \mathfrak{so}(3) &:\qquad H_1:=\lambda_1 +2\lambda_2,\quad D:= A-\sqrt 3 \overline C_1,\quad
\overline
D\\
   \mathfrak m &: \qquad H_2:= 2\lambda_1 -\lambda_2 ,\quad B,\overline B,C_2,\overline C_2, \quad E:= \overline A+
\frac 2{\sqrt 3}C_1,\ \ \overline E.
  \end{aligned}
 \end{equation} 
 This is not a common root space decomposition, as $[H_2,D]\neq 0$, and $[H_2, E]\notin \mathfrak m$. It is a
naturally reductive coset though. There is no complex structure on $\mathfrak m$ such that $\mathfrak{so}(3)\subset
\mathfrak u(\mathfrak m)$, and it will be important to know precisely the action of $\mathfrak k$ on $\mathfrak m$.
After a rescaling 
 $$ B\mapsto \sqrt{\frac52}B,\quad C_2\mapsto \sqrt 5C_2,\quad E\mapsto \sqrt{\frac 32}E,$$
  it is given by
 \begin{equation}
  \begin{aligned}{}
    [H_1,B] &= 3iB ,\qquad\quad [H_1,C_2] = 2iC_2,\qquad [H_1, E] = iE ,\\ 
    [D,B] &= -\sqrt 6 C_2 ,\quad \ [D,C_2] =\sqrt {10} E,\qquad [D,E]= 2\sqrt 6i H_2, \\
    [D,\overline B] &= 0,\qquad \qquad[D,\overline C_2] = \sqrt 6\overline B,\qquad\ \  [D,\overline E]= -\sqrt {10}
\overline C_2 ,\\
   [H_1,H_2] &=0 ,\qquad \qquad\ [D,H_2]  =-\sqrt 6i\overline  E   .
  \end{aligned}
 \end{equation} 
 In particular the weights of $\mathfrak{so}(3)$ in the Dynkin label notation are (6),(4),(2),(0),(-2),(-4),(-6),
which belong to the irreducible seven-dimensional representation with highest weight (6) (in general the highest-weight
module to highest weight $(n)$ has dimension $n+1$). We write $\mathfrak m = (6)$. For the linear dilaton we will
need to extend this to $\mathfrak m'= \mathfrak m \oplus \mathbb R$, which is the $\mathfrak k$-module $\mathfrak
m'=(6)\oplus (0)$, and the weights of the corresponding spinor module are
 \begin{equation}
   \Omega (S(\mathfrak m')) = 2\times \Big\{(6),(4),(2),2\times (0),(-2),(-4),(-6)\Big\},
 \end{equation} 
 showing that $S(\mathfrak m') = 2\times \big((6)\oplus (0)\big)$. We conclude that there are invariant spinors again,
and in order to identify them we need to introduce a complex structure. Take its $+i$ eigenspace to be spanned by
 \begin{equation}
   +i :\qquad \overline B,C_2,E , H_2 -i \partial_x,
 \end{equation} 
 where $\partial_x $ is the standard basis vector field of $\mathbb R$. Then the commutation relations above tell us
how the basis elements of $\mathfrak k$ are quantized:
\begin{equation}
 \begin{aligned}{}
   \widetilde{\text{ad}}(H_1) &= \frac i2 \Big( 3( \zeta^B\overline\zeta ^B + \overline\zeta^B  \zeta^B) + 2(\overline \zeta^C\zeta ^C +
\zeta^C\overline  \zeta^C) +\overline \zeta^E\zeta ^E + \zeta^E\overline  \zeta^E\Big),\\
  \widetilde{\text{ad}}(D) &=  \sqrt 6\overline \zeta ^C\overline\zeta^B + \sqrt{10}\zeta^C\overline \zeta^E +\sqrt 6 i\zeta^E(\zeta^H
+ \overline \zeta ^H), \\
  \widetilde{\text{ad}}(\overline D) &= \sqrt 6\zeta ^C \zeta^B + \sqrt{10}\overline \zeta^C\zeta^E -\sqrt 6
i\overline \zeta^E(\zeta^H
+ \overline \zeta ^H).
 \end{aligned}
\end{equation} 
 Invariant under all of $\mathfrak{so}(3)$ are
 \begin{equation}
    1 - i w^{BCEH},\qquad   w^H -i w^{BCE},\qquad 
 \end{equation} 
 and these multiplied by $i$, or the Majorana-Weyl spinors
\begin{equation}
 \eta=(1+i) 1 +(1-i) w^{BCEH} ,\qquad \mu=(1-i)w^H - (1+i)w^{BCE}.
\end{equation} 
 It is much more effort to determine the action of $\gamma(H)$ on these spinors than it was in the $\mathfrak{su}(n)$
case. We need the commutation relations between elements of $\mathfrak m$. They are as follows:
\begin{equation}
 \begin{aligned}{}
  [H_2, B] &= iB,\quad [H_2,\overline B,]= -i\overline B,\quad [H_2,C_2]= -i C_2 ,\quad [H_2,\overline C_2]=iC_2, \\
  [H_2,E] &= -\sqrt 6i D-iE,\qquad \qquad\ \ [H_2,\overline E] = \sqrt 6i \overline D+i\overline E,\\
 [B,\overline B]  & = 6i H_1 + 2i H_2 ,\qquad\qquad \quad\ [C_2,\overline C_2] = 4i H_1 - 2iH_2 ,\\
 [E,\overline E] &= 2iH_1-2iH_2, \qquad \qquad \quad \ [B,\overline C_2] =  -\sqrt 6 \overline D +2 E ,\\
 [B,C_2] &= 0, \qquad\ \ [B,E] = 0 ,\qquad\ \  [B,\overline E] =-2C_2, \\
 [C_2,E] &= 2B,\qquad [C_2,\overline E] = \sqrt{10} \overline D,
 \end{aligned}
\end{equation} 
 from which we read off that
 \begin{equation}
  H= 4\big(e^{B\overline{ C E}}+e^{ \overline B CE}\big) -2i\big(e^{\overline BB}+e^{C\overline C} +e^{E \overline
E}\big)\wedge \big(e^H + e^{\overline H}\big),
 \end{equation} 
 leading to the result
 \begin{equation}
   \gamma(H)\eta = -42\mu,\qquad \gamma(H)\mu =42\eta.
 \end{equation} 
 Defining $\phi:\mathbb R\rightarrow \mathbb R$ by $\phi(x)=-\frac 72 x$, we obtain $\gamma(d\phi) = -\frac 72i(\overline \zeta^H -\zeta ^H)$, and
 \begin{equation}
   \begin{aligned}{}
    \gamma\Big(d\phi - \frac 1 {12} H\Big) \eta &= 0,\\
    \gamma\Big(d\phi - \frac 1 {12} H\Big) \mu &= -7\eta \neq 0.
   \end{aligned}
 \end{equation} 
 There is thus exactly one Majorana-Weyl spinor on $\mathbb R\times $SO(5)/SO(3) solving the heterotic supersymmetry
equations, leading to $\mathcal N=1$ SUSY in the orthogonal 2-dimensional space. More abstractly, we could have
argued as in the 
case Spin(7)$/G_2$, using that SO(5)/SO(3) carries a nearly parallel $G_2$-structure with one Killing spinor
\cite{FrKMS97}.

\section{Conclusion}

 We have shown that there is a large class of solutions to the heterotic supergravity BPS-equations \eqref{IntroSUSYEqtns} and the Bianchi identity \eqref{IntroBianchi} on spaces of the form
$\mathbb R^{p,1}\times G/K$, with a non-symmetric coset $G/K$ and homogeneous, nonzero fields $H,d\phi, A$. The dilaton
$\phi$ has to be taken as a linear function on $\mathbb R$, whereas the other fields live on $G/K$. In the special case
of trivial $K$, one obtains a WZW model coupled to a linear dilaton, and the quantization condition on $H$ fixes the
volume modulus to a discrete set of possible values. For
nontrivial $K$, the scale is completely fixed in terms of $\alpha'$, but this time by the Bianchi identity, which was
trivially satisfied in the WZW case. It is not clear however how this relation behaves under inclusion of higher order
$\alpha'$ corrections. \\
 \indent All of the models are therefore free of a volume modulus, one of the
physically pleasing features which stands in sharp
contrast to the existence of the linear dilaton, which rules out these spaces as models
for our observable universe. On the other hand the WZW model for SU(2) with linear dilaton describes the near
horizon geometry of NS5-branes \cite{Callan91, Aharony98}.

\bigskip
 
 The gaugino equation $\gamma(F)\varepsilon=0$ (together with the condition $\nabla^- \varepsilon=0$) can be seen as a generalized instanton equation, and is quite interesting in its own right. Here we considered only the simplest solution $F=R^-$ \eqref{Cosets:canCurvature1}, which was shown to solve the Bianchi identity as well in the discussion preceding \eqref{BianchiSolved}. More solutions are known to exist on spaces of the form $\mathbb R^p\times G/H$ however \cite{HILP09, ILPR09, HP10, BILL10}, and using these one could try to construct more general supergravity solutions.

\bigskip

 Except for the WZW models, the equations of motion at order $\alpha'$ are not all satisfied by our models, and
therefore it is not clear a priori whether a corresponding CFT really exists. Interestingly, the article \cite{Eguchi03}
by 
Eguchi, Sugawara and Yamaguchi has a construction of CFTs on exactly the kind of cosets we considered here, coupled to a
linear dilaton. Furthermore, they come to the same conclusions about preservation of supersymmetry, namely that one
does not get spacetime supersymmetry for SO($n+1)$/SO($n)$, whereas all the non-symmetric cosets are supersymmetric. But
a complete disagreement exists on the level; like in WZW models they find a discrete set of allowed levels, whereas in
our case it seems to be fixed. Additionally it has been argued that the gauged WZW-CFTs used in \cite{Eguchi03} do
not describe sigma models with target space the usual geometric cosets, but that the latter require a
modified gauging, which has successfully been applied to quotients with respect to abelian subgroups only \cite{Is04}.

\bigskip

 There is an interesting connection to the cone construction, which was emphasized in \cite{Eguchi03}. 
Consider a compact manifold $M$ of Sasaki-Einstein, 3-Sasakian, nearly K\"ahler (in 6D), or nearly parallel $G_2$ type
(in 7D). Then its cone $\mathbb R_{>0} \times M$, equipped with a particular cone metric, has a parallel spinor and
holonomy contained in SU($n$), Sp($n)$, $G_2$ or Spin(7) respectively \cite{Baer93, Boyer07}. Thus the cone can be used to
construct BPS-vacua of string theory (any type, and also of $M$ theory) of the form
 \begin{equation}
   \mathbb R^{p,1} \times \mathbb R_{>0} \times M,
 \end{equation} 
 with $H=d\phi=0$, and $A$ equal to the Levi-Civita connection, for the heterotic case. The amount of supersymmetry preserved is the same as in our models with the radial direction replaced by the linear dilaton. 
 Interest in these models is
largely due to the conjectured duality between the worldvolume conformal field theory on branes placed at the tip of the
cone and certain string theories on anti-de Sitter spaces, which describe the near-horizon geometry of the brane 
\cite{Klebanov-Witten, Acharya98}. For a proposal for the CFT side see \cite{Martelli09}. 

\bigskip

 All of the cosets considered in this paper carry indeed one of the non-integrable structures listed above, although in general with a metric different from the one we used. On SU($n+1)$/SU($n)=S^{2n+1}$, Sp$(n)$/SU($n)$,
SO$(n+1)/$SO($n-1)$ we find Sasaki-Einstein structures, whereas Sp($n+1)/$Sp($n)=S^{4n+3}$ even carries a 3-Sasakian
structure.
The execptional examples we treated are the nearly K\"ahler space $G_2/SU(3)$ and the nearly parallel $G_2$-manifolds
Spin(7)$/G_2$ and SO(5)/SO(3). It seems likely that every spin manifold of these geometric types can serve as a
supersymmetric heterotic string background, but I leave this to future work.

\bigskip
  There are thus two types of solutions involving the cosets, one on the cone with all fields trivial, and
the one with a linear dilaton. A similar situation exists for type II strings and M-theory, where solutions on
AdS$_p\times X$ ($X$ having Ricci-flat cone) give the near-horizon geometry of branes localized at the
tip of the cone over $X$ \cite{Acharya98}. In our case only the Callan-Harvey-Strominger model on SU(2) appears to be a
brane-limit \cite{Callan91}, whereas it has been argued that the general linear dilaton solutions we considered
can be obtained as a decoupling limit of string theory on the cone, describing the dynamics of states at the
singularity \cite{Aharony98, Giv99}.  

\section*{Acknowledgement} 
 I would like to thank O. Lechtenfeld, A.D. Popov and J.M. Figueroa-O'Farrill for useful comments and discussions,
and the School of Mathematics at the university of Edinburgh for hospitality, where this work was completed. I
acknowledge support from the DFG-Graduiertenkolleg 1463. The work was done within the framework of the project supported
by the DFG under the grant 436 RUS 113/995.

\end{document}